\documentclass[11pt,letterpaper]{article}
\usepackage[margin=1in]{geometry}

\usepackage{amsthm}
\usepackage{dsfont}
\usepackage{multicol}

\usepackage{colortbl}
\usepackage{xspace}
\usepackage{mdframed}
\usepackage{color,soul}
\usepackage{tcolorbox}

\usepackage{paralist}
\usepackage{pdfpages}
\usepackage{enumitem}
\usepackage{float}
\usepackage{booktabs}
\usepackage[override]{cmtt}
\usepackage{color,xcolor}
\usepackage{authblk}
\usepackage{graphicx,color,eso-pic}
\usepackage{pgfplots, pgfplotstable}

\usepackage[
	lambda,
	advantage,
	operators,
	sets,
	adversary,
	landau,
	probability,
	notions,
	logic,
	ff,
	mm,
	primitives,
	events,
	complexity,
	asymptotics,
	keys]
{cryptocode}

\usepackage{tikz}
\usepackage{pgfplots}
\pgfplotsset{compat=newest}
\usepgfplotslibrary{fillbetween}
\usetikzlibrary{fit, shapes, positioning, patterns, arrows,shadows}
\usetikzlibrary{decorations.pathreplacing,calligraphy}

\tikzset{
    master/.style={
        execute at end picture={
            \coordinate (lower right) at (current bounding box.south east);
            \coordinate (upper left) at (current bounding box.north west);
        }
    },
    slave/.style={
        execute at end picture={
            \pgfresetboundingbox
            \path (upper left) rectangle (lower right);
        }
    }
}

\sethlcolor{lightgray}

\usepackage{cite}
\usepackage{amsthm,amstext}
\usepackage{amsmath,amstext,amsfonts,amssymb,latexsym}
\usepackage{amsbsy}
\usepackage{stmaryrd}
\usepackage{pifont}

\usepackage{thmtools}

\definecolor{darkred}{rgb}{0.5, 0, 0}

\usepackage[bookmarks=true,pdfstartview=FitH,colorlinks,linkcolor=darkred,filecolor=darkred,citecolor=darkred,urlcolor=darkred]{hyperref}

\usepackage[capitalize,noabbrev]{cleveref} %

\def\authorNote{} %
\ifdefined\authorNote
\newcommand{\elaine}[1]{{\footnotesize\color{magenta}[Elaine: #1]}}
\newcommand{\yuhao}[1]{{\footnotesize\color{cyan}[Yuhao: #1]}}
\newcommand{\mengqian}[1]{{\footnotesize\color{orange}[Mengqian: #1]}}
\else
\newcommand{\elaine}[1]{}
\newcommand{\yuhao}[1]{}
\newcommand{\mengqian}[1]{}
\fi

\newtheorem{theorem}{Theorem}

\newtheorem{lemma}{Lemma}
\newtheorem{claim}{Claim}

\newtheorem{fact}{Fact}

\theoremstyle{definition}
\newtheorem{definition}{Definition}
\newtheorem{remark}{Remark}

\newcommand{\rate}{\ensuremath{r}\xspace}
\newcommand{\tp}{\ensuremath{t}\xspace}
\newcommand{\qty}{\ensuremath{q}\xspace}

\newcommand{\aux}{\ensuremath{\alpha}\xspace}

\begin{document}
\begin{titlepage}
\title{A Trilemma in AMM Mechanism Design}

\author[1]{Yuhao Li}
\author[2]{Elaine Shi}
\author[2]{Mengqian Zhang\textsuperscript{*}}

\affil[1]{Columbia University}
\affil[2]{Carnegie Mellon University}

\date{}

\maketitle
\begingroup
\renewcommand\thefootnote{\fnsymbol{footnote}}
\setcounter{footnote}{1}
\footnotetext{Author ordering is randomized. See \href{https://www.aeaweb.org/journals/policies/random-author-order/search?RandomAuthorsSearch\%5Bsearch\%5D=Pnvt2DM0WZE-}{here}.}
\setcounter{footnote}{2}
\footnotetext{
Email:
\href{yuhaoli@cs.columbia.edu}{yuhaoli@cs.columbia.edu},
\href{elaineshi@cmu.edu}{elaineshi@cmu.edu},
\href{mengqianzhang@cmu.edu}{mengqianzhang@cmu.edu}.
}

\endgroup
\thispagestyle{empty}

\begin{abstract}
Blockchains have popularized the Automated Market Makers (AMMs),
where users trade crypto-assets directly with a smart contract, governed by a pricing function embedded in the contract's code. 
Today, users of AMMs are often forced to accept unfavorable prices due to widespread
front-running and back-running attacks,  
commonly known as 
Miner Extractable Value (MEV).  
Several earlier works show impossibility results suggesting that completely removing MEV at the consensus layer is impossible, partly because the consensus layer is agnostic of application-level
semantics. 
For this reason, more recent works have advocated mechanism design approaches at the application (i.e., smart contract) level. 

We study a natural two-asset AMM mechanism design problem recently initiated and explored in prior work by Chan, Wu, and Shi, in which they proposed a mechanism that satisfies a surprisingly strong notion of incentive compatibility (IC), under the consensus assumption that the underlying blockchain provides sequencing fairness.

In this paper, we investigate the (in)feasibility of simultaneously achieving IC and other desirable properties such as weak local efficiency (wLE) and uniform pricing (UP). At a high level, wLE requires that the mechanism should not leave any unfulfilled demand from users whose asking prices are not overly restrictive, and whose orders could have been executed directly against the pool. UP requires that all orders that get (partially) executed must trade at the same exchange rate.

We unveil the underlying mathematical structure of AMM mechanism design, and our main results can be summarized as a trilemma-style theorem: among the desirable properties IC, wLE, and UP, any two out of three are possible, but no mechanism can satisfy all three. 

\end{abstract}

\end{titlepage}

\section{Introduction}
Blockchains and cryptocurrencies have enabled   
decentralized finance (DeFi),   
with Automated Market Makers (AMMs)~\cite{theoryamm}
being the most popular DeFi application today. 
As of March 2021, the total crypto assets held 
by the top six
AMMs (including Uniswap and Balancer) was valued at 
\$15 billion~\cite{sok-amm}. 

An Automated Market Maker (AMM) is a smart contract backed by  
a decentralized blockchain. 
In a standard two-asset AMM mechanism, the contract
maintains a liquidity pool (henceforth called the {\it pool} for short) containing 
two crypto-assets denoted  
$X$ and $Y$ respectively.  
Users can trade with the pool based on the pricing function defined
by the smart contract. 
At any point in time, the pricing function determines an exchange rate for the two   
assets based on supply and demand. 
For example, a widely adopted rule is  
the {\it constant-product
potential function} defined as follows.
Let ${\sf Pool}(X, Y)$
denote the pool's state 
meaning that the pool holds $X \geq 0$ and $Y \geq 0$ units of each asset, 
respectively\footnote{Whenever the context is clear, we overload the notation  
$X$ and $Y$ to mean the {\it quantity} of the two crypto-assets 
held by the pool.}.
A constant product potential requires that $X \cdot Y = C$
for some constant $C > 0$.
This means that if a user buys $x$ amount of $X$
from the pool, it needs to pay $- y$
amount of $Y$ such that $(X - x) (Y - y) = C$.
Observe that under such a pricing curve, the price of a crypto-asset 
will rise 
as more units are purchased from the pool.

The rapid adoption of DeFi applications such as AMMs
has led to widespread 
incentive attacks often referred to as Miner Extractable Value (MEV). 
Specifically, since users submit their orders in the clear,
arbitrageurs can easily launch front-running and back-running
attacks to make risk-free profit while forcing the ordinary user
to suffer from the worst-possible 
price~\cite{attackdefi,analyzesandwich,mevgametheory,maximizingmev,theorymev,darkforest}.  
Such front-running and back-running
attacks are exacerbated if the arbitrageur  
colludes with a block producer 
(also called the miner\footnote{In this 
paper, 
we use the terms ``miner'' and ``block producer'' interchangably. 
Our results do not care whether  
the underlying consensus protocol is proof-of-work or proof-of-stake.}) 
who has unilateral control over the next block's contents
as well as the sequencing of transactions within the block. 
MEV-style incentive attacks 
are harmful in multiple dimensions.  
Not only do they exploit the victim users, but also 
undermine the stability of the underlying consensus ecosystem.  
Specifically, major block producers today have private contracts with 
arbitrageurs and ordinary users alike, offering favorable
positions in the block to them at a price.  
This has in turn caused the blockchain's ecosystem 
to evolve towards a high degree of centralization --- a recent measurement study showed that more than 85\%
of the blocks today are built by the top 
two block producers~\cite{buildercentral}.

These negative externalities of MEV have been widely recognized, 
and the blockchain community
has made it their top priority to mitigate MEV. 
Unfortunately, several recent 
works~\cite{bahrani2023transaction,credible-ex} have shown impossibility results that can be interpreted to mean that ``complete removal of MEV at the consensus layer (subject to today's architecture) is impossible''. 
Partly this is because the consensus layer is agnostic of the application semantics of the smart contracts, 
and yet the utility of any transaction can be an arbitrary function of the blockchain's state (which is why MEV exists in the first place). 
As a result, more recent works have advocated a combined ``consensus + application''-level approach
towards mitigating MEV. The idea is for the consensus layer \textit{not} to completely eliminate MEV, but to provide certain desirable properties that lend themselves to MEV mitigation. Then, the application layer (i.e., smart contract layer) can take advantage of such properties in mechanism design to achieve provable notions of MEV resilience.  

In particular, the recent work of Chan, Wu, and Shi~\cite{ammmechdesign} exemplifies this design principle, where they gave a formal treatment of AMM mechanism design and  
defined a notion of MEV resilience called {\it incentive compatible}. 
Roughly speaking, an AMM mechanism is said to be 
incentive compatible (IC), 
if every user (or miner)
maximizes its profit by truthfully reporting
its intrinsic valuation and demand, and no strategic move
or manipulation will lead to positive gains in utility.  
Because an arbitrageur (possibly colluding with the miner) 
can be viewed as a special user
with zero intrinsic demand, 
the IC notion of~\cite{ammmechdesign} 
implies arbitrage resilience, that is, 
no one can make risk-free profit. 
Chan et al. then showed that 
if the consensus layer offers {\it weak sequencing fairness}, 
one can indeed design an AMM mechanism 
that satisfies IC. Specifically, the weak fair-sequencing model 
is meant to capture a new generation of consensus protocols
with a {\it decentralized sequencer}~\cite{espresso-seq,decentral-seq,orderfair00,orderfair01,orderfair02}, who sequences the transactions 
based on their (approximate) time of arrival. 
We stress that 
this assumption itself does not automatically eliminate MEV or prevent
front-running, since a strategic player can still insert
orders dependent on others' orders,
and even race and preempt the victims' orders if it has a faster
underlying network --- this is also why the feasibility results in~\cite{ammmechdesign} are interesting and non-trivial. 

In this paper, we revisit the AMM mechanism design problem with the goal of understanding the composability of incentive compatibility (IC) with other desirable properties.
More specifically, we ask the following questions:
\begin{itemize}[leftmargin=6mm,itemsep=1pt]
\item 
{\it Is IC at odds with the efficiency of the mechanism?}
The mechanism of \cite{ammmechdesign} fails to offer a natural notion of efficiency
for orders that demand a more stringent price than the initial market price of the AMM pool denoted by $\rate_0$. 
For example, consider a user who wishes to buy asset $X$ at a maximum price
$\rate < \rate_0$. In Chan et al.'s mechanism,  
this user's order will be ignored even if after executing the batch, the pool's ending price is actually lower than $\rate$. 
Therefore, a natural question is whether we can have an IC mechanism 
with better efficiency. 

\item 
{\it Can we get IC by offering uniform pricing (UP) to the entire
batch of orders?}
Uniform pricing (UP) has been adopted 
in many academic publications~\cite{batchamm00,batchamm01,ramseyer2023speedex,ramseyer2023augmenting,batchnotic} 
as well as real-world AMM mechanisms~\cite{cow}. 
UP is often regarded as a desirable property: not only does it provide fair pricing to all users within the same batch, but it also mitigates MEV by eliminating internal arbitrage and sandwich attacks. 
However, looking more broadly --- and especially in our context, where IC is intended to provide a very strong incentive guarantee --- internal arbitrage elimination can be interpreted as a relatively weak notion: it only prevents those strategic players \textit{with zero intrinsic demand} from making risk-free profit. This, in particular, still permits a user or miner with non-trivial valuation and demand to profit from strategic deviations (see concrete manipulations by strategic users in~\Cref{sec:upwle} and by miners in~\cite {batchnotic}), which is precisely one of the misbehaviors that IC intends to preclude. Therefore, a natural question is whether UP also lends itself to achieving incentive compatibility. 
\end{itemize}

\subsection{Our Results and Contributions}
To answer the above questions, we further explore the design space to understand the tension among the various desirable properties.
We prove several new feasibility and infeasibility results 
that jointly characterize the price of IC in AMM mechanism design. 

\paragraph{IC precludes strong notions of efficiency.}
First, we show that incentive compatibility (IC) conflicts
with strong notions of efficiency including the standard notion of 
Pareto optimality (PO)
and a more relaxed 
notion called local efficiency (LE).  
Specifically, PO means that 
there should not exist an alternative legal outcome that makes someone 
strictly happier while leaving everyone else
at least as happy as the mechanism's outcome. 
In other words, at the end of the mechanism, 
no Pareto improvement can be made 
for any user or any group of users to incrementally trade among
themselves or trade with the pool. 
LE is a relaxation of PO requiring that  
the mechanism should not leave any unfulfilled demand
that could have been satisfied by trading with the pool
at a price desired by the user. 
Our result is 
stated in the following theorem:

\begin{theorem}[IC precludes LE or PO] 
No AMM mechanism can simultaneously achieve 
IC and LE (or PO), and this impossibility holds even in the weak fair-sequencing 
model. 
\label{thm:icnoteff}
\end{theorem}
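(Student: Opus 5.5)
Since PO implies LE (a Pareto-optimal outcome leaves no profitable unfilled trade with the pool), it suffices to show that IC and LE cannot hold together. I will do this in the weak fair-sequencing model; impossibility there carries over to any stronger adversary model.

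The approach is a small counterexample built from a tiny family of inputs. Fix a constant-product pool with initial price $\rate_0$. Consider one honest user who buys $X$ with limit price $\rate$ and demand $\qty$. LE has a clear consequence: if the user's limit is not binding, the mechanism must execute the order against the pool until either the demand is met or the pool's marginal price reaches $\rate$. With a single buyer and $\rate>\rate_0$, the user therefore obtains $x(\rate,\qty)>0$ units, paying along the curve. Next I bring in a second strategic player, which could be an arbitrageur with zero intrinsic demand or a user with a different type. In the weak fair-sequencing model this player can insert orders that are placed before or after the honest user's order. The first step is to pin down, from LE alone, the outcome of every two-order configuration that matters: opposite-direction pairs and same-direction pairs whose limits lie on either side of $\rate_0$. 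LE forces the pool to run all the way to the binding limit prices.

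The second step is to show that IC is violated. The most promising manipulation uses the mechanism's obligation to fill orders whose limit is below $\rate_0$ once the ending price allows it. This is exactly the inefficiency of Chan et al.'s mechanism that the introduction points to. Take a buyer of $X$ with true limit $\rate<\rate_0$, together with a seller of $X$ whose order pushes the pool price below $\rate$. LE then requires the buyer to be filled at a price below its limit. A player holding both roles can now profit: it sells, which drives the price down, and then buys back cheaply. If the mechanism tries to block this by charging the buyer more, then the original single-buyer-plus-seller instance gives the buyer an incentive to misreport its limit or its demand. In either case some player gains strictly from deviating. I will turn this into a case analysis on how the mechanism splits the pool's surplus between the two orders. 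In every case, one of three deviations is profitable: injecting a fake seller order, inflating or deflating the reported demand, or misreporting the limit price. The gain is computed explicitly from the constant-product curve.

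The hardest part is that LE only constrains the final state, namely that no unmet demand remains, and says nothing about the payments. The mechanism has freedom in how it allocates, so the argument must cover every allocation compatible with LE. My first attempt is a monotonicity/payment-identity argument in the style of Myerson. IC for a single user fixes the user's payment as the integral of its allocation along the pool curve. Adding the arbitrageur's fake opposite order changes the allocation that LE forces. That change in turn forces a payment under which the arbitrageur earns strictly positive profit, which contradicts arbitrage resilience, a consequence of IC.
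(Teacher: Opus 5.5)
There is a genuine gap. The proposal never commits to a concrete instance, and the one concrete manipulation it describes cannot work. Suppose the deviating player's ${\sf Sell}(X)$ and ${\sf Buy}(X)$ orders are the only orders in the batch. Then conformance to the potential function makes the player's joint bundle exactly a net trade with the pool along the curve. In this batch model there is no ``sell first, drive the price down, then buy back cheaply''; only the pool's state after the whole batch is constrained. By no free lunch, a pure pool trade can never give $\delta_x,\delta_y\ge 0$ with one of them strict. A risk-free profit must therefore be extracted from a \emph{third}, honest order, and your sketch never says whose surplus is transferred or why the mechanism must transfer it.

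The Myerson-style fallback does not close this gap. For a single user the payment is pinned down by conformance, not by IC. Types here are two-dimensional (rate and quantity) and players may submit several orders, so no payment identity is available off the shelf. And ``this forces a payment under which the arbitrageur earns strictly positive profit'' is precisely the statement that needs proof. Likewise, LE does not pin down two-order outcomes (as you concede later), so the promised case analysis over payment splits has no starting point.

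The missing idea is a three-order instance in which IR, an averaging argument, and symmetry do all the work; this also disposes of your worry that LE is silent about payments. The paper uses three orders:
\begin{itemize}
\item an honest ${\sf Buy}(X)$ order with rate $\infty$;
\item a strategic ${\sf Sell}(X)$ order of quantity $q$ at rate $r^*$, the end marginal rate after buying the residual $Q_{\rm buy}-Q_{\rm sell}$ from the pool;
\item a second, strategic rate-$\infty$ ${\sf Buy}(X)$ order of nearly the same quantity as the honest one ($q+\epsilon$ versus $q+2\epsilon$).
\end{itemize}
LE forces all three orders to be fully filled, since otherwise the pool ends above $r^*$. IR gives the seller at least $r^*$ per unit, while the residual comes from the pool at an average strictly below $r^*$. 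Hence the buyers' average price is strictly below the seller's. The player controls its arrival position (possible under weak fair sequencing when its network is fastest), so it can make its own buy order the one with the more favorable price. Its sell--buy pair then nets $+\epsilon$ in $X$ and strictly positive $Y$ for small $\epsilon$, contradicting arbitrage resilience, which IC implies.

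Since $r^*>r_0$, the seller is ineligible, which is why this does not clash with the IC + wLE mechanism of Chan, Wu, and Shi. So your instinct to exploit LE's obligation toward an out-of-the-money order is right. Your ``buyer below $r_0$'' variant can be repaired the same way: an honest seller with a tiny rate, a strategic seller of equal quantity $Q$, and a strategic buyer of quantity $Q+\epsilon$ at the induced end rate $r_*<r_0$. But it works only with the honest third order and the tie-breaking step.
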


Having established that even LE is too strong to be compatible with IC, it is clear that some notion of efficiency weaker than LE needs to be introduced.
Recall that Chan et al.~\cite{ammmechdesign} introduced a neat mechanism under the weak fair-sequencing 
model that explicitly satisfies IC. Their mechanism also implicitly satisfies a natural notion of efficiency that is slightly weaker than local efficiency.\footnote{Our \Cref{thm:icnoteff}
provides a mathematical justification for why~\cite{ammmechdesign}
only achieved this weaker notion of efficiency.} We refer to it as weak local efficiency (wLE). In comparison with LE, wLE cares about achieving local efficiency only for ``reasonable'' orders --- those that do not insist on an exchange rate more stringent than the initial market price.

\paragraph{A trilemma among IC, wLE, and UP.}
We next ask whether uniform pricing (UP) aids the design of incentive-compatible mechanisms with meaningful notions of efficiency.
Since we have established wLE as a meaningful efficiency notion compatible with IC, we refine the question and ask whether it is possible to achieve IC, wLE, and UP  simultaneously. 
We prove a trilemma-style theorem, stating that one can choose any two out of these three properties, but it is impossible to satisfy all three at the same time. 
This trilemma theorem
can be interpreted to mean that perhaps somewhat counterintuitively, although UP naturally eliminates internal arbitrage, it is actually somewhat at odds with the stronger notion of IC --- asking for both UP and IC will result in only very inefficient mechanisms.
The intuition why UP does not lend to IC (contrary to common belief) 
will be explained in more detail in \Cref{sec:upwle} where we give a natural mechanism that is UP + wLE 
but not IC.

Our trilemma result is stated in the following theorem:

\begin{theorem}[Trilemma among IC, wLE, and UP]
No AMM mechanism can simultaneously achieve IC, wLE, and UP, and this impossibility
holds even in the weak fair-sequencing model. 
On the other hand, it is feasible to achieve wLE + UP in the plain model, and it is feasible to achieve IC + wLE or IC + UP in the weak fair-sequencing model.   
\label{thm:trilemma}
\end{theorem}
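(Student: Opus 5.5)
The theorem has four parts: one impossibility result and three feasibility results. I would handle the feasibility parts by explicit constructions and the impossibility part by a small counterexample.

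\emph{IC + wLE.} I would simply invoke the mechanism of Chan, Wu, and Shi~\cite{ammmechdesign} in the weak fair-sequencing model. It is known to be IC. It remains to check that it is wLE: every order whose limit price is no more stringent than the initial market price $\rate_0$ is either fully executed against the pool, or it stops at the point where the pool's price reaches its limit.

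\emph{IC + UP.} I would use a deliberately inefficient mechanism, such as the trivial mechanism that executes nothing. It is vacuously UP, and it is IC because no report changes anyone's outcome. If a nontrivial example is preferred, I would let a single sequenced order trade and return all others. That way, only one exchange rate ever appears, and truthfulness follows from a posted-price argument.

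\emph{wLE + UP in the plain model.} I would use a batch clearing mechanism. Aggregate the buy and sell demands as functions of the price. Net them against each other, and find the price $\rate^*$ at which the residual net demand, traded against the pool's potential curve, moves the pool exactly to the marginal price $\rate^*$. Then execute every order at the uniform rate $\rate^*$. A monotonicity and intermediate-value argument on the net-demand function gives existence of $\rate^*$. wLE holds because any unfilled reasonable order would have limit price beyond $\rate^*$, and hence could not trade against the ending pool. As the paper suggests, this mechanism is not IC.

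\emph{Impossibility.} I would build a minimal instance, probably two users or one user who may split an order, and show that IC, wLE, and UP force contradictory outcomes. The plan has three steps.
\begin{enumerate}
\item Use wLE on a single reasonable buy order to pin down its outcome: it must be executed along the pool curve, up to its limit or its full demand.
\item Use UP to force the uniform price. When a second order is added, or when the demand is split into two orders, the combined execution must use one rate. For a split buy of total size $q$, that rate is the average rate $\Delta Y/q$ along the curve, not the marginal rates.
\item Show that some agent profits by deviating. Under UP, a user can split its order, or inject a fake order in the opposite direction (fair sequencing does not stop this), and thereby shift the uniform price in its favor while wLE still forces execution. This contradicts IC.
\end{enumerate}
For a clean contradiction, I would compare the utility of truthful reporting against a specific deviation, with everything computed on the constant-product curve $XY=C$.

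The main obstacle is this impossibility step. The difficulty is finding a deviation that stays strictly profitable even though the fair-sequencing constraint lets the mechanism depend on arrival order. I would first try the single-user case: compare one truthful order against the same demand submitted together with an extra counter-order. wLE forces both the real and the fake order to be filled, and UP makes the user's own fake sell order lower the price paid on the real buy. I would then verify that the net utility gain is positive for suitable parameters.
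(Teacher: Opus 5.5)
The IC + wLE part (citing Chan et al.) and the IC + UP part (the do-nothing mechanism) match the paper. The impossibility part has a genuine gap: you never fix an instance or a deviation, and the one concrete route you propose cannot work.

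\textbf{Why the single-user route fails.} With a single user, conformance alone forces its joint outcome to be a direct pool trade of its net quantity $n$, namely $(n, -(Y(X_0-n)-Y_0))$, however it splits into real and fake orders. The lower uniform price your fake ${\sf Sell}(X)$ induces on the real buy is returned unit for unit on the fake sell. A mechanism may give an eligible lone buyer $\min(q,x')$, where $r_{\rm end}(x')=r$. This is consistent with wLE and UP, and no pool trade strictly dominates it. So single-user instances yield no contradiction.

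\textbf{What is missing.} You need a second, honest user who takes the other side of the trade. You also need an instance in which the outcomes of an \emph{arbitrary} mechanism are pinned down. For two buyers $(r_1,\infty)$ and $(r_2,\infty)$, wLE and UP leave a continuum of splits $(x_1,x_2)$. The paper therefore first uses IC itself. Deviations to $(\infty,x_1\pm dx)$, whose allocations wLE fixes, yield $\frac{\partial f_i}{\partial a_i}(x_1,x_2)=r_i$. The paper solves these for the unique allocation, and only then shows that deviating to $(\infty,\Delta X)$, with $r_{\rm avg}(\Delta X)=r_2$, is strictly profitable for $r_1$ near $r_2^\star$. Arrival order is not the real obstacle, since $\alpha$ stays fixed throughout.

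\textbf{A simpler repair.} Your ``shift the uniform price'' intuition can be realized much more directly by putting the strategic player on the short side. Let an honest user bid $({\sf Buy}(X),\infty,q_H)$, and let the strategic user have true type $({\sf Sell}(X),r_0,q_A)$ with $q_A<q_H$. For any reported quantity $s<q_H$, wLE forces both orders to fill completely: any shortfall for the seller leaves the ending rate above $r_0$. UP with conformance then fixes the price at $r_{\rm avg}(q_H-s)$, so the seller's utility is $U(s)=s\,(r_{\rm avg}(q_H-s)-r_0)$. This is positive on $(0,q_H)$ and tends to $0$ as $s\to q_H$. Hence, for $q_A$ close to $q_H$, under-reporting $s=q_H/2$ is strictly profitable. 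This uses only a quantity misreport.

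\textbf{Your wLE + UP mechanism is not well formed.} Executing every order at the ending marginal rate $r^*$ violates conformance. A pool that hands out $\Delta>0$ units of $X$ must receive exactly $Y(X_0-\Delta)-Y_0=\Delta\, r_{\rm avg}(\Delta)<\Delta\, r^*$ units of $Y$. As you note yourself in step~2 of your impossibility plan, a uniform price must be the average pool price $\overline{p}(r^*)$; $r^*$ should serve only as the fill cutoff. Two further changes follow. $Y$-denominated quantities must be converted at $\overline{p}(r^*)$. Ineligible orders must be discarded first, since, for example, a ${\sf Sell}(X)$ order with rate in $(\overline{p}(r^*),r^*)$ cannot be filled individually rationally at $\overline{p}(r^*)$. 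With these repairs you obtain the paper's Mechanism~2.
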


Clearly, the main open question left by \Cref{thm:trilemma} 
as well as~\cite{ammmechdesign}
is whether we can achieve IC + wLE in the plain model, 
without the weak fair-sequencing
assumption. See more discussion on this in \Cref{section: future}.

\paragraph{Additional results.}
While our paper is centered around understanding the price
of IC, we also explore the tension between UP and LE. We prove
an impossibility result showing that no AMM mechanism can simultaneously achieve UP and LE. 
This demonstrates that LE is a very stringent notion from a different angle --- not only is it incompatible with IC, but also incompatible with UP. 
This again justifies that it is natural to relax LE to wLE, which allows us to overcome this impossibility since \Cref{thm:trilemma} implies the feasibility of UP + wLE.

\subsection{Additional Related Work}

\paragraph{Verifiable sequencing rules.}
A line of work has explored mitigating MEV at the application layer. 
The model we adopt follows directly from Chan et al.~\cite{ammmechdesign}, 
which in turn drew inspiration from the elegant work 
of Ferreira and Parkes~\cite{credible-ex}. 
Notably, Chan et al. relax several stringent requirements of Ferreira and Parkes to circumvent impossibility results. 
While Ferreira and Parkes describe their approach 
as enforcing ``verifiable sequencing rules'' at the consensus layer to mitigate MEV, 
it is in fact more desirable to view their sequencing rules
as being enforced 
by the smart contract application --- ideally the consensus layer  
should be agnostic of application-specific semantics. 
However, Ferreira and Parkes's model differs in nature
from ours and that of Chan et al., since they impose a couple of %
restrictive 
constraints: the orders must be fulfilled one after another, and moreover, the pool's state must respect the potential function
after executing {\it every} order, not just at the end of the batch. 
These requirements lead to very strong impossibility
results as demonstrated by Ferreira and Parkes: not only is IC impossible in their model, but even the weaker notion of ``arbitrage free'' is impossible in their setting. 
Subsequently,  
Li et al.~\cite{greedyseq} studied the miner's profit-maximizing strategy under the greedy sequencing rule proposed in \cite{credible-ex}, and the implications for users when the miner adopts the optimal strategy. 

\paragraph{Batch clearing at uniform price.}
Several works have explored the idea of batch clearing
at a uniform price~\cite{batchamm00,batchamm01,cow,ramseyer2023speedex,ramseyer2023augmenting}. 
Uniform pricing is a desirable property since it eliminates internal arbitrage, as well as the well-known sandwich attacks.
Nevertheless, Zhang et al.~\cite{batchnotic} recently observed that miners can still extract value from batch auctions and investigated profit-maximizing strategies for miners.

\paragraph{Transaction fee mechanism design.}
A recent line of work on transaction fee mechanism
(TFM) design~\cite{roughgardeneip1559-ec,foundation-tfm,crypto-tfm,DBLP:conf/sigecom/GaneshTW24,DBLP:conf/sigecom/ChungRS24}
explores how to design the blockchain space auction. 
However, so far, this line of work 
is agnostic of application-level MEV, since they fail
to capture general 
utility functions that may depend on transaction sequencing. 
Bahrani et al.~\cite{bahrani2023transaction}
showed strong impossibility results for solving the full spectrum
of application-level MEV 
solely at the TFM layer.
In this sense, 
application-level mechanism design~\cite{credible-ex,ammmechdesign}
as well as our work 
complement the line of work on TFM design
by explicitly capturing 
application-dependent semantics and utility functions.

\section{Model}
\label{sec:model} 

\subsection{AMM Preliminaries}\label{sec:ammpreliminaries} 
\paragraph{Order.}

Each order is of the form 
$(\tp, r, \qty, \aux)$ 
where 
\begin{itemize}[leftmargin=6mm,itemsep=1pt]
\item 
the first 
field 
$t \in \{{\sf Buy}(X), {\sf Buy}(Y), {\sf Sell}(X), {\sf Sell}(Y)\}$ 
specifies the type of the order;
\item  
the second field $r > 0$ specifies the worst  
exchange rate the user is willing to tolerate.  
Specifically, the rate $r$ is expressed
in terms of the price of each unit of $X$, measured
in units of $Y$.
For a ${\sf Buy}(X)$ order, $r$ specifies the maximum units of $Y$ the user is willing to pay in exchange for one unit of $X$.
For a ${\sf Sell}(X)$ order, $r$ specifies the minimum units of $Y$ the user wants to get for selling each unit of $X$. 
For a ${\sf Buy}(Y)$ order, $1/r$ is the maximum units of $X$ the user is willing to pay for each unit of $Y$. 
For a ${\sf Sell}(Y)$ order, $1/r$ is the minimum units of $X$ the user accepts to receive for selling each unit of $Y$;
\item 
the third field $\qty$ denotes the maximum number of units the user wants to trade. 
For example, for a ${\sf Sell}(X)$ order, it means
that the user wants to sell at most $\qty$ units of $X$; and 
for a ${\sf Buy}(Y)$ order, it means that the user wants to
buy at most $\qty$ units of $Y$;  
\item 
the last field $\aux$ is used
to encode arbitrary auxiliary information that the mechanism may
take into account, 
e.g., identity information, timestamp of the order, 
and so on. 
\end{itemize}

\paragraph{AMM and potential function.}
There is a pool whose state is denoted by a pair $(X, Y)$
where $X > 0$ and $ Y>0$ denote the amount
of the assets $X$ and $Y$ remaining in the pool. 
Without risking ambiguity, we may use $X$ and $Y$
to denote the pool state as well as 
to name the assets.

The pricing is defined through a potential
function $\Phi(\cdot, \cdot)$
such that if $(X, Y)$
and $(X', Y')$
are both valid pool states, it must be that
$\Phi(X, Y) = \Phi(X', Y')$. 
In other words, if the initial pool state is $(X, Y)$,
then to buy $x$ units of $X$ from the pool, 
the payment in $Y$, denoted $-y$, must satisfy 
$\phi(X-x, Y-y) = \Phi(X, Y)$.
As a concrete example, the {\it constant-product function}
$\Phi(X, Y) = X\cdot Y$
is most commonly adopted in Uniswap
contracts.

It is standard to assume that $\Phi(X, Y)$ is {\it increasing}, 
{\it differentiable}, and {\it concave}~\cite{credible-ex,greedyseq,ammmechdesign}, which implies the following:
\begin{itemize}[leftmargin=5mm,itemsep=1pt]
\item 
{\it Bijective mapping.}
Fix some initial pool state $(X_0, Y_0)$
and let $C := \Phi(X_0, Y_0) > 0$. 
For every $X > 0$, there is a unique $Y := Y(X)$
such that $\Phi(X, Y) = C$. 
Therefore, for convenience, we can view  
$Y(X)$ as a function of $X$, 
whenever $C$ is clear from the context.
\item 
{\it No free lunch.}
$Y(X)$ is a strictly decreasing function in $X$, that is, 
$\forall 0 < X_0 < X_1$, $Y(X_0) > Y(X_1)$.  
In other words, to buy a positive amount of $X$ from the pool,
one must pay a positive amount of $Y$ to the pool. 
\item 
{\it Increasing marginal cost.}
Suppose $X_0 < X_1$, then 
$-\frac{dY}{dX}(X_0) > - \frac{dY}{dX}(X_1)$. 
In other words, the marginal cost of $X$ 
increases as one purchases more $X$ from the pool. 

For convenience, we also refer to $-\frac{dY}{dX}(X_0)$ as the {\it market rate} when the pool has $X_0$ units of $X$, i.e., the price per unit of $X$ for purchasing an infinitesimally small amount of $X$ from the pool. 
\end{itemize} 

\paragraph{AMM mechanism.}
We consider {\it deterministic} AMM mechanisms defined in~\cite{ammmechdesign}.
The mechanism receives a batch of orders,  
and {\it fully} or {\it partially} executes a subset of the orders.  
For each order in the batch,
its outcome is denoted $(x, y)$
which means its net gain in $X$ and $Y$, respectively. 
A negative value of 
$x$ (or $y)$ means a loss
in the asset $X$ (or $Y$). 
The mechanism should satisfy the following well-formedness properties:
\begin{itemize}[leftmargin=6mm,itemsep=1pt]
\item 
{\it Reasonable fulfillment.} 
For a ${\sf Buy}(X)$-type order with quantity $\qty$, 
it must be $0 \leq x \leq \qty$. 
For a ${\sf Buy}(Y)$-type order with quantity $\qty$, 
it must be $0 \leq y \leq \qty$. 
For a ${\sf Sell}(X)$-type order with quantity $\qty$, 
it must be $0 \leq -x \leq \qty$. 
For a ${\sf Sell}(Y)$-type order with quantity $\qty$, 
it must be $0 \leq -y \leq \qty$. 
In other words, no order should 
fulfill more than the desired quantity
or fulfill a negative amount (i.e., fulfill in the opposite direction 
of buy or sell).  
\item 
{\it No free lunch.}
Either $x \cdot y < 0$, or $x = y = 0$. 
In other words, no order should strictly gain  
in one type of asset without losing in the other. 
\item 
{\it Individual rationality.} 
For a ${\sf Buy}(X)$ or ${\sf Sell}(Y)$-type order, 
it must be that $-\frac{y}{x} \leq r$. 
For a ${\sf Buy}(Y)$ or ${\sf Sell}(X)$-type order, it must be that 
$- \frac{y}{x} \geq r$. 
In other words, %
the executed exchange rate should be no worse than 
the rate specified in the order. 
\item 
{\it Conformant to pricing function.}
After execution, 
let $x_{\rm tot}$ and $y_{\rm tot}$ denote
the sum of all users' net gain in $X$ and $Y$, respectively. 
The end pool state $(X', Y') := (X - x_{\rm tot}, Y - y_{\rm tot})$ 
must satisfy the potential function $\Phi(X', Y') = \Phi(X, Y)$. 
\end{itemize}

Like~\cite{ammmechdesign}, our definition only 
requires that the potential function be respected {\it before
and after  
executing the entire batch}. In comparison,
the earlier work 
of Ferreira and Parkes~\cite{credible-ex} works in a more draconian model in which the orders are fulfilled one after another, and the pool's state must satisfy the potential function
after executing {\it every} order.  
For this reason, the impossibility results in~\cite{credible-ex} are not applicable
to our setting.

\subsection{Strategy Space}
We assume that each user has 
an {\it intrinsic type}
$T := (\tp, r, \qty, \aux)$ 
sharing the same form as an order. 
For a direct revelation mechanism, the honest user 
strategy is to {\it truthfully report} its intrinsic type.

For defining the strategy space, we consider
two models called the plain model and the weak fair-sequencing model respectively introduced in~\cite{ammmechdesign}.

\paragraph{Plain model.}
The plain model is meant to capture
mainstream consensus protocols today 
where the block producer
(who can be a strategic player in our mechanism)
can fully control
the block contents and the sequencing of transactions in the block.

In the plain model, we assume that a strategic user (or miner) 
with intrinsic type $(\tp, r, \qty, \aux)$ 
may adopt one or more of the following strategies:
\begin{itemize}[leftmargin=6mm,itemsep=1pt]
\item
Post zero or multiple arbitrary orders which may or may not
reflect its intrinsic type --- this captures
strategies that involve misreporting valuation and demand,
as well as posting of fake orders;
\item
Censor honest users' orders --- this captures a strategic miner's ability to exclude
certain orders from the block;
\item
Arbitrarily misrepresent
its own auxiliary information field $\alpha$, or
even modify the $\alpha$ field of honest users' orders ---
meant to capture the miner's ability to decide the sequencing of transactions within a block,
where the arrival-time and position information may be captured
by the auxiliary information field $\alpha$.
\item
Decide its strategy
{\it after} having observed honest users' orders.
\end{itemize}

\paragraph{Weak fair-sequencing model.}
We will also work in a weak fair-sequencing model defined in~\cite{ammmechdesign}, 
meant to capture a new generation of consensus protocols
that employ a decentralized sequencer
and rank orders based
on their (approximate) arrival times~\cite{espresso-seq,decentral-seq,orderfair00,orderfair01,orderfair02}. 
We stress that even in the weak fair-sequencing model,
it is possible for a strategic user to
observe a victim's order, post a dependent order,
and have the dependent order race against and front-run
the victim's order.
Such a front-running attack can succeed
especially
when the strategic user's network is faster
than the victim's.
In particular, we stress that the
weak fair-sequencing model
is sequencing orders based on their {\it arrival times,
not the time of submission of these orders} --- for this reason,  
front-running is still possible in this model.  

Recall
that a user's intrinsic {\it type} has the form $(\tp, r, \qty, \aux)$
where $(\tp, r, \qty)$
denotes the user's trading type, true valuation, and budget. 
In the weak fair-sequencing model, we use
the $\aux$ field to denote  
the order's arrival time
--- a smaller $\aux$ means that the user arrives earlier.

We shall assume that under honest strategy, a user's order
should always have the correct $\alpha$ whose
value is determined by nature,
and equal to the time at which the order is generated
plus the user's network delay.  
A strategic user is allowed
to delay the submission of its order.
More specifically, a strategic user or miner can adopt
the following strategies
in the weak fair-sequencing model:

\begin{itemize}[leftmargin=6mm,itemsep=1pt]
\item
A strategic user or miner
with intrinsic type $(\tp, r, \qty, \aux)$
is allowed to post
zero or multiple bids
of the form $(\_, \_, \_, \aux')$
as long as $\aux' \geq \aux$.
This captures misreporting valuation and demand,
posting fake orders, as well as delaying the posting
of one's orders.
\item
The strategic user
or miner
can decide its strategy
{\it after} observing honest users' orders.
\end{itemize}

Compared to the plain model, the weak fair-sequencing model imposes 
certain restrictions on the strategy space. Specifically, in the plain model,
a strategic user or miner
can arbitrarily modify the $\alpha$ field of its own order
or even others' orders,
and a strategic miner may censor honest users' orders.
In the
weak fair-sequencing model, a strategic
user or miner can no longer
under-report its
$\alpha$,
cannot modify honest users' $\alpha$,
and cannot censor honest users' orders,
because the sequencing of the transactions is determined
by the underlying decentralized sequencer.
Importantly, 
despite these constraints on the strategy space,
the weak fair-sequencing model still permits
front-running attacks as mentioned earlier, and
thus mechanism design remains non-trivial even
under this restricted strategy space.

\subsection{Utility Ranking}
\label{sec:rank-main}

We define utility as a ranking over outcomes, which expresses a user's preferences among the different possible outcomes.

Recall that we use 
a pair $(x, y)$ to denote the outcome of a user's order,
indicating a net gain of $x$ in $X$, and a net gain of $y$
in $Y$ (where a net loss is captured as negative gain).
Consider two outcomes
$(x_0, y_0)$ and $(x_1, y_1)$, and suppose that the user's intrinsic type is $T$. We write $(x_0, y_0) \preceq_T (x_1, y_1)$ to mean that outcome $(x_1, y_1)$ is at least as good as $(x_0, y_0)$.

First consider a user with intrinsic type $T = ({\sf Sell}(X), r, \qty, \_)$.
Recall that for such a user,~$\qty$ represents the maximum number of units of $X$ that the user is willing to sell. Thus, an outcome is feasible only if the user's net loss in $X$ does not exceed $\qty$, namely if $-x \leq \qty$.
We define utility by a \textit{total} ordering induced by $\mathcal{U}$: we say that $(x_0, y_0) \preceq_T (x_1, y_1)$ if $\mathcal{U}(x_0, y_0) \leq \mathcal{U}(x_1, y_1)$, where
\begin{equation*}
    \mathcal{U}(x, y) \coloneqq \begin{cases}
        \ x \cdot r + y, & \text{if $-x \leq \qty$};\\[6pt]
        \ -\infty, & \text{otherwise}.
    \end{cases}
\end{equation*}
That is, for any outcome of a ${\sf Sell}(X)$ order, $\mathcal{U}$ assigns the standard quasilinear utility whenever the outcome does not exceed the user's budget (i.e., $-x \leq \qty$), and otherwise, sets it to $-\infty$. 
Moreover, if 
$\mathcal{U}(x_0, y_0) < \mathcal{U}(x_1, y_1)$, then we say 
$(x_0, y_0) \prec_T (x_1, y_1)$, i.e., the latter outcome is strictly preferred.

Now consider a user with intrinsic type $T = ({\sf Buy}(X), r, \qty, \_)$, where $\qty$ represents the user's intrinsic demand for asset $X$. We define utility as a \textit{partial} ordering.
Specifically, we say that $(x_0, y_0) \preceq_T (x_1, y_1)$ if \textit{both} of the following inequalities hold:
\begin{equation*}
\begin{cases}
\ x_0 \cdot r + y_0 \leq x_1 \cdot r + y_1; \\[6pt]
\ \min\{x_0, q\} \cdot r + y_0 \leq \min\{x_1, q\} \cdot r + y_1.
\end{cases}
\end{equation*}
Further, if at least one $\leq$ is replaced with $<$, then we say that $(x_0, y_0) \prec_T (x_1, y_1)$, i.e., the user strictly prefers the latter outcome. 

The main idea remains to compare outcomes using the standard quasilinear utility. The subtlety is that, for a ${\sf Buy}(X)$ order, the user may receive more than~$\qty$ units (e.g., when the user misreports), but the user's valuation for units beyond $\qty$ is not specified. 
A natural understanding is that the marginal value of an excess unit is lower-bounded by $0$ and upper-bounded by $r$. 
In the above definition, the first inequality evaluates all units of $X$, including units beyond $\qty$, at rate $r$. 
The second inequality caps the user's value for $X$ at $\qty$ units, thereby assigning zero marginal value to any units received beyond $\qty$. 
Together, the two inequalities compare outcomes under the two extreme interpretations of the user's value for excess units, and the relation requires that $(x_1,y_1)$ be at least as good as $(x_0,y_0)$ regardless of how the user values units beyond their intrinsic demand.

\begin{remark}
To better understand this partial ordering, consider a user whose intrinsic type is to buy up to $\qty$ units of $X$ at a maximum price of~$r$ per unit. Suppose in one outcome, this user fulfills their demand of $\qty$ by paying $-y$ units of $Y$. 
In another outcome, the user receives $\qty + 1$ units of $X$ while paying the same amount, namely, receiving one extra unit of $X$ for free. The latter outcome is then preferred, even though it exceeds the user's intrinsic demand $\qty$. 
By contrast, suppose the user buys $\qty + 1$ units of $X$ but pays $(-y) + r/2$ units of $Y$ instead. Then the two outcomes are considered incomparable: the user obtains one excess unit of $X$ (which overshoots the demand) but at a favorable marginal price of $r/2$ for it. Since the intrinsic type does not specify the
user's value for units beyond $\qty$, the ranking cannot determine
whether this tradeoff is beneficial.
\end{remark}

For other types, including
${\sf Sell}(Y)$/${\sf Buy}(Y)$,
a total/partial ordering
can be symmetrically defined --- we give the full
definition in~\Cref{sec:rank}.

Note that a strategic user (or miner) with an intrinsic type may post multiple orders. In that case, the outcome refers to the joint outcome of all orders submitted by that player.

\subsection{Desirable Properties of an AMM Mechanism}
\label{sec:prop}

\paragraph{Incentive compatibility.}
In the definition below,
we use $HS(T)$ to denote the honest strategy of a user
with intrinsic type $T$ --- for a direct-revelation mechanism,
the honest strategy is simply to reveal the user's true type.
Further, we use ${\sf out}^u(X_0, Y_0, {\bf b})$
to denote the outcome of user $u$ when
the mechanism is executed over initial pool
state ${\sf Pool}(X_0, Y_0)$, and a vector of orders
${\bf b}$.

\begin{definition}[Incentive compatibility (IC)~\cite{ammmechdesign}]
Given an AMM mechanism,  %
we say that it satisfies IC (w.r.t. some partial
ordering relation $\preceq_T$), 
iff for any initial pool state ${\sf Pool}(X_0, Y_0)$,
for any vector of orders ${\bf b}_{-u}$
belonging to all other users except $u$, 
for any intrinsic type $T$ of the strategic player $u$, 
for any possible strategic 
order vector 
${\bf b'}$ of the player $u$, 
either 
${\sf out}^u(X_0, Y_0, {\bf b}_{-u}, HS(T)) \succeq_T
{\sf out}^u(X_0, Y_0, {\bf b}_{-u}, {\bf b}')$
or 
${\sf out}^u(X_0, Y_0, {\bf b}_{-u}, HS(T))$ 
and ${\sf out}^u(X_0, Y_0, {\bf b}_{-u}, {\bf b}')$
are incomparable w.r.t. $\preceq_T$.
\label{defn:ic}
\end{definition}

More intuitively, the definition says that {\it no strategic play can result
in a strictly better outcome
than the honest strategy}. Note that the ``strategic player $u$'' above could be either a strategic user or miner, and the boldface ${\bf b}'$ above is intended to capture the possibility of submitting multiple orders (i.e., Sybil attacks).

For completeness, we also recall the definition of arbitrage resilience below. Note that prior work~\cite{ammmechdesign} proved that IC implies 
arbitrage resilience in either the plain model or the weak fair-sequencing model (see their Fact 2.1).

\begin{definition}[Arbitrage resilience~\cite{ammmechdesign}]
We say that a mechanism satisfies arbitrage resilience iff given
any initial pool state, any input vector of orders,  the following must hold: there does not exist a subset of orders whose joint outcomes
result in $\delta_x \geq 0$ net gain in $X$ and $\delta_y \geq 0$ net gain in $Y$, such that at least one of $\delta_x$ and $\delta_y$ is
strictly greater than 0.
\end{definition}

\paragraph{Efficiency properties.}
We may want the mechanism to enjoy certain efficiency properties. 
First, a very natural notion is Pareto optimality defined below,
where preferences among outcomes are based on the partial ordering
notions defined in~\Cref{sec:rank-main} and~\Cref{sec:rank}.

\begin{definition}[Pareto optimality (PO)]
We say that an AMM mechanism is Pareto optimal (PO) iff 
the following holds: for 
any initial pool state and any set of users with arbitrary true types, there does
not exist another legal outcome in which some user's outcome is strictly better  
than the mechanism's outcome, while every other user's outcome is at least as good as the mechanism's outcome. 
\end{definition}

PO can also be interpreted to mean that 
the users cannot achieve a Pareto improvement by further trading among themselves
or trading with the pool. 
We next define a relaxed notion of efficiency, which requires that the mechanism should not leave any unfulfilled demand that could have been 
satisfied through trading with the pool at a price desired by the user. 

\begin{definition}[Locally efficient (LE)]
An AMM mechanism is said
to be locally efficient, iff after execution, 
the ending exchange rate is 
no smaller than the rate of any ${\sf Buy}(X)$ or ${\sf Sell}(Y)$ order that is not completely fulfilled, 
and no larger than the rate of
any ${\sf Buy}(Y)$ or ${\sf Sell}(X)$ order
that is not completely fulfilled.
\end{definition}

We note that LE is naturally desirable in the context of AMMs, since if, after the mechanism executes, some users still wish to continue trading at the pool's ending state, the mechanism has failed to realize obvious gains in social efficiency, which the notion of LE is intended to preclude.

We next define a further relaxed notion of efficiency 
called weak local efficiency, which places the same requirements as LE but only for ``reasonable'' users who do not demand a price more stringent than the initial market price.

\begin{definition}[Weakly locally efficient (wLE)]
A ${\sf Buy}(X)$ or ${\sf Sell}(Y)$ 
order is said to be  
{\it eligible} iff its rate 
is no less  than the initial market price. 
Similarly, 
a ${\sf Sell}(X)$ or ${\sf Buy}(Y)$ order is said to be 
{\it eligible} iff its rate is no higher than the initial market price.
An AMM mechanism is said to be weakly locally efficient
(wLE) iff the same conditions of LE are guaranteed, but now
only for eligible orders.   
\end{definition}

The following fact follows directly from the definitions. 

\begin{fact}
PO $\Longrightarrow$ LE $\Longrightarrow$ wLE. 
\end{fact}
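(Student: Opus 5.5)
We prove the two implications separately, directly from the definitions.

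\emph{LE $\Rightarrow$ wLE.} This one is immediate. wLE imposes exactly the LE conditions on the ending exchange rate, but only for eligible orders, and eligible orders are a subset of all orders. So any mechanism meeting the LE conditions for every order that is not completely fulfilled also meets them for every eligible such order.

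\emph{PO $\Rightarrow$ LE.} We argue by contrapositive. Suppose the mechanism's outcome violates LE. By symmetry we treat a ${\sf Buy}(X)$ order $u$ of type $({\sf Buy}(X), r, q, \_)$ that is not completely fulfilled, so $x_u < q$, while the ending market rate $r' := -\frac{dY}{dX}(X')$ satisfies $r' < r$. The ${\sf Sell}(Y)$, ${\sf Buy}(Y)$ and ${\sf Sell}(X)$ cases are identical with the inequalities flipped.

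We build a legal outcome that Pareto-dominates the mechanism's outcome. Keep every other user's outcome unchanged. Let $u$ buy an additional small amount $\epsilon>0$ of $X$ from the pool, with $\epsilon \le q-x_u$, and pay the amount $\delta$ of $Y$ determined by the potential, i.e.\ $\Phi(X'-\epsilon, Y'+\delta)=\Phi(X',Y')$.

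\textbf{Cost of the extra units.} By differentiability, $\delta/\epsilon \to r'$ as $\epsilon\to 0$. Since $r'<r$, for small enough $\epsilon$ we have $\delta < r\epsilon$.

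\textbf{Legality of the new outcome.} We check each well-formedness condition.
\begin{itemize}[leftmargin=6mm,itemsep=1pt]
\item The new end state lies on the potential curve, so the outcome is conformant to the pricing function.
\item Since $x_u+\epsilon\le q$, reasonable fulfillment holds.
\item Because $x_u+\epsilon>0$ and $u$ pays a positive amount of $Y$, no free lunch holds.
\item For individual rationality: the old average price was at most $r$ and the marginal price of the extra units is below $r$, so the new average price is still at most $r$.
\end{itemize}

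\textbf{Strict improvement for $u$.} We need $(x_u,y_u)\prec_T(x_u+\epsilon, y_u-\delta)$. Both quantities in the partial order increase by $r\epsilon-\delta>0$, because $\min\{x,q\}$ also moves from $x_u$ to $x_u+\epsilon$ (as $x_u+\epsilon \le q$).

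Thus $u$ is strictly better off while every other user is unchanged. This contradicts PO, so PO implies LE.

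The only point needing care is that the marginal price of the extra $\epsilon$ stays below $r$. This follows from differentiability of the pricing curve at the ending state; alternatively, the increasing-marginal-cost property bounds the marginal cost near $X'$.
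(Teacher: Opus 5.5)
Your proof is correct and is the routine verification the paper has in mind when it says the fact "follows directly from the definitions" (the paper gives no further argument): LE $\Rightarrow$ wLE holds by restricting to eligible orders, and PO $\Rightarrow$ LE holds by letting the unfulfilled order trade an extra $\epsilon$ against the pool near the ending marginal rate. One small correction to your closing aside: increasing marginal cost only gives the lower bound $\delta/\epsilon > r'$, so the upper bound $\delta < r\epsilon$ comes from differentiability of $Y(\cdot)$ at $X'$, which your main argument already uses.
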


\begin{definition}[Uniform pricing (UP)]
All orders that get (partially) executed, no matter the type, must all trade 
at the same exchange rate. 
\end{definition}

\section{Impossibility Results}
In this section, we present several impossibility results.
\subsection{IC + LE $\Longrightarrow$ Impossible} 

We first prove that IC precludes LE. 

\begin{theorem}[IC + LE $\Longrightarrow$ impossible]\label{theorem: IC + LE impossible}
No AMM mechanism can simultaneously satisfy
incentive compatibility and local efficiency.
Further, this impossibility holds even in the weak fair-sequencing model.
\end{theorem}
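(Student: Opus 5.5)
The plan is to build a small family of order profiles on which LE forces the mechanism to trade, and then show that IC cannot hold across the whole family. Every deviation used will only change the strategic player's own $(r,\qty)$ fields, and never by under-reporting $\aux$. This makes the argument valid in the weak fair-sequencing model, which in turn covers the plain model.

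Fix a constant-product pool (any strictly concave $\Phi$ should do) with initial market rate $\rate_0$. The mechanism is forced to act on an ineligible order as follows. Take an honest user $A$ with a ${\sf Buy}(X)$ order at rate $r_A<\rate_0$ and large quantity. Take the strategic user $u$ with true type ${\sf Sell}(X)$ at rate $s<r_A$ and quantity $\qty$, where $\qty$ is large enough that selling all of it to the pool would push the market rate below $r_A$. First I pin down the honest outcome using LE, individual rationality and conformance to $\Phi$:
\begin{itemize}[leftmargin=6mm,itemsep=1pt]
\item $u$'s order is eligible. So either it is fully filled, or the ending rate is at most $s<r_A$.
\item In both cases the ending rate falls below $r_A$. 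LE then makes $A$'s order fully filled or pins the ending rate at $r_A$.
\end{itemize}
Together these give a narrow set of possible end states: roughly, the ending rate is at most $r_A$ and $A$ trades. This is exactly where LE matters. A wLE mechanism could ignore $A$, since $A$ is ineligible.

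Next I get leverage from IC on the seller's side. I use $u$'s total-order quasilinear utility $x\cdot s+y$ to run a Myerson-style argument over $u$'s reported quantity $\qty'$ and rate $s'$, with ${\bf b}_{-u}$ fixed.
\begin{enumerate}[leftmargin=6mm,itemsep=1pt]
\item IC for every true type in a range implies that $u$'s filled amount is monotone in the report.
\item It also implies that $u$'s total payment equals the integral of the critical rates.
\item Compare the report $\qty'$ small, where the pool alone absorbs $u$'s sale near rate $\rate_0$, with the report $\qty'$ large, where LE drags the ending rate down to about $r_A$ and $A$ takes the surplus.
\end{enumerate}
From this I aim to show that IC forces $u$ to receive at least the pool's marginal rates on the first units. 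Meanwhile $A$ must pay at most $r_A$ per unit for every unit it receives, by individual rationality. Applying conformance to $\Phi$ on the combined trade then yields a contradiction: the pool would have to supply $Y$ beyond what the curve allows, or the mechanism would violate no-free-lunch.

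If this accounting turns out to be too loose, I will use a cleaner fallback. Make $A$ itself strategic, with true rate $r_A$. Compare $A$'s honest outcome with its outcome after reporting a slightly lower rate, or after splitting into two orders with later $\aux'$. LE pins the ending rate to exactly the reported rate whenever the buy is partially filled. Lowering the report therefore lowers $A$'s price while LE still forces a positive fill, a strict improvement in both coordinates of $A$'s partial order. This is the standard "shading" manipulation that uniform clearing at the LE-forced price invites, and it directly contradicts Definition~\ref{defn:ic}.

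The main obstacle is the pinning-down step. LE, IR and conformance alone do not determine the split of $Y$ between $u$, $A$ and the pool, so the outcome is only constrained to a region. I will first try to choose $\qty$ and $r_A$ so that in the honest profile $A$ is necessarily only partially filled. Then LE forces the ending rate to equal $r_A$ exactly, which fixes the end state $(X',Y')$ uniquely. That leaves only the split between $u$ and $A$, which the IC-derived payment identity then determines.
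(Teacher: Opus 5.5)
\textbf{Main line (seller-side IC with $A$ fixed).} This cannot yield a contradiction. It uses IC only for $u$ with $A$'s order held fixed, and on that family all your constraints can be met at once. Let $z_1$ be the amount of $X$ the pool absorbs as its rate falls from $r_0$ to $r_A$. Define a non-increasing schedule $m(z)$ as follows:
\begin{itemize}[leftmargin=6mm,itemsep=1pt]
\item it equals the pool's marginal rate for $z\le z_1$;
\item it equals $r_A$ for the next $q_A$ units, which go to $A$ at exactly $r_A$ each;
\item it follows the pool's curve again afterwards.
\end{itemize}
Suppose that $u$, reporting $(s',q')$, sells $z^*=\min\{q',\sup\{z: m(z)\ge s'\}\}$ units (with $z^*=0$ if $s'\ge r_0$) and receives $\int_0^{z^*} m(z)\,dz$.

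One checks directly that this outcome satisfies the following:
\begin{itemize}[leftmargin=6mm,itemsep=1pt]
\item it is LE and IR for both orders;
\item it is conformant to $\Phi$;
\item truthful reporting is optimal for every type $(s,q)$ of $u$ among the reports you vary, since it is a fixed concave menu and obeys exactly the monotonicity and payment identity you plan to derive.
\end{itemize}
Here $u$ does receive the pool's marginal rates on its first units, $A$ pays at most $r_A$ per unit, and the pool's $Y$ balances exactly. So the promised ``pool supplies $Y$ beyond the curve'' never happens.

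Your pinning step is also false. A partially filled $A$ only forces the ending rate to be $\ge r_A$. An upper bound would need an \emph{unfilled} seller at rate $\le r_A$, and your seller must be fully filled.

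\textbf{Fallback (shading by $A$).} This has the same pinning defect plus one more. ``Lowering the report lowers $A$'s price'' presumes $A$ is charged the LE clearing price, i.e., uniform pricing, which is not a hypothesis of this theorem. Moreover, a lower average price on fewer units is not an improvement in $A$'s partial order.

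Suppose instead that $A$ is charged curve prices: $u$'s units enter the pool first, and $A$ buys back along the curve up to its reported rate. This is IR for $u$ when $s$ is small. Then shading only drops units $A$ was buying below $r_A$, so it strictly hurts $A$. Hence any contradiction must couple both sides' incentives through conformance, or use extra orders, and the proposal does neither. Note also that the paper's proof genuinely uses injected (Sybil) orders. Your restriction to $(r,q)$ misreports would prove a stronger statement that nothing here establishes.

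\textbf{The missing idea} is the paper's arbitrage argument.
\begin{itemize}[leftmargin=6mm,itemsep=1pt]
\item Take two ${\sf Buy}(X)$ orders at rate $\infty$ with quantities $q+\epsilon$ and $q+2\epsilon$. Add one ${\sf Sell}(X)$ order of quantity $q$ whose rate is the ending rate $r^*$ reached by buying the residual $q+3\epsilon$ units from the pool.
\item LE forces all three orders to be fully filled.
\item The seller receives on average at least $r^*$, while the pool supplies its units at average price below $r^*$. So the buyers' average price is strictly below the seller's.
\item For small $\epsilon$, the seller together with the better-priced buyer nets strictly positive amounts of both $X$ and $Y$.
\end{itemize}
A strategic player who injects these two orders, winning the tie-break via an early arrival time (which the weak fair-sequencing model permits), thus profits risk-free. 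This contradicts arbitrage resilience, which IC implies.
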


\begin{proof}
Consider the following scenario. 
There are two ${\sf Buy}(X)$ orders and one ${\sf Sell}(X)$ order. 
The ${\sf Sell}(X)$ order has quantity $\qty$, while the two ${\sf Buy}(X)$ orders have quantities $\qty + \epsilon$ and $\qty + 2\epsilon$, respectively. 
Let $Q_{\rm buy} \coloneqq 2\qty + 3\epsilon$ be the total quantity of all ${\sf Buy}(X)$ orders, and let $Q_{\rm sell} \coloneqq \qty < Q_{\rm buy}$ be the total quantity of the ${\sf Sell}(X)$ order. 
Both ${\sf Buy}(X)$ orders specify an exchange rate of $\infty$, 
while the ${\sf Sell}(X)$ order specifies an exchange rate
equal to the ending rate (denoted $r^*$) that would obtain if one buys $Q_{\rm buy} - Q_{\rm sell}$ units of $X$ from the pool. 

\begin{claim}
By LE, all three orders must be fully executed in the above scenario. 
\end{claim}
\begin{proof}
To see this, first observe that by LE all ${\sf Buy}(X)$ orders which have $\infty$ rate must be fully executed. 
Suppose, for contradiction, that the ${\sf Sell}(X)$ order is not fully executed. Let its executed quantity be $Q_{\rm sell}' \in [0, Q_{\rm sell})$. Then the excess demand from ${\sf Buy}(X)$ orders $Q_{\rm buy} - Q_{\rm sell}' > Q_{\rm buy} - Q_{\rm sell}$ must be traded with the pool. As a result, by increasing marginal cost of the curve, the pool would end at a state with rate strictly greater than $r^*$, which contradicts LE for the ${\sf Sell}(X)$ order. 
Therefore, all orders must indeed be fully executed.
\end{proof}

\begin{claim}
Let $\overline{p}_{\rm buy}$
and 
$\overline{p}_{\rm sell}$
denote the average exchange prices of the two ${\sf Buy}(X)$ orders and the ${\sf Sell}(X)$ order, respectively. 
It must be that $\overline{p}_{\rm buy} < \overline{p}_{\rm sell}$. 
\label{clm:buysmaller}
\end{claim}
\begin{proof}
Since $Q_{\rm buy} > Q_{\rm sell}$, the total buy quantity $Q_{\rm buy}$ must be supplied by the ${\sf Sell}(X)$ order together with the pool. 
Equivalently, imagine that we  
first execute $Q_{\rm sell}$ units of 
 ${\sf Buy}(X)$ and 
$Q_{\rm sell}$ units of ${\sf Sell}(X)$
at a uniform price of $\overline{p}_{\rm sell}$, and then buy the residual $Q_{\rm buy} - Q_{\rm sell}$ units of $X$ from the pool.  
The average buy and sell prices 
computed from the two steps must
be equal to 
$\overline{p}_{\rm buy}$ 
and 
$\overline{p}_{\rm sell}$. 

Because the ${\sf Sell}(X)$ order
specifies a rate of $r^*$, it must be 
that $\overline{p}_{\rm sell} \geq r^*$. 
On the other hand, the $Q_{\rm buy} - Q_{\rm sell}$
units of $X$ bought from the pool
must enjoy an average price that is strictly smaller 
than the ending rate $r^*$. 
As a result, it must be that $\overline{p}_{\rm buy} < 
\overline{p}_{\rm sell}$.
\end{proof}

Given a set $S$ of orders, 
let $x_S$ and $y_S$ denote the net gains in $X$ and $Y$
of the set $S$, where a negative gain means a loss.

\begin{claim}
For sufficiently small $\epsilon > 0$, there exist a set $S$ consisting of one ${\sf Sell}(X)$ and one ${\sf Buy}(X)$ order such that $x_S > 0$ and $y_S > 0$.
\label{clm:arbitragepossible}
\end{claim}
\begin{proof}
Let $S$ contain the ${\sf Sell}(X)$ order and the ${\sf Buy}(X)$ order 
that enjoys more favorable average price 
(either one if the two ${\sf Buy}(X)$ orders have the same average executed price). 
Combining~\Cref{clm:buysmaller}, the average price of the ${\sf Buy}(X)$ order in $S$, henceforth denoted $\overline{p}^S_{\rm buy}$, 
must be strictly smaller than $\overline{p}_{\rm sell}$. 
Recall that each ${\sf Buy}(X)$ order has a quantity strictly larger than the $Q_{\rm sell}$, and in particular at most $ Q_{\rm sell} + 2 \epsilon$. 
Then by construction, we have $x_S > 0$, 
and $y_S \geq \overline{p}_{\rm sell} \cdot Q_{\rm sell} 
- \overline{p}^S_{\rm buy}  \cdot (Q_{\rm sell} + 2 \epsilon)
= (\overline{p}_{\rm sell} - \overline{p}^S_{\rm buy})
\cdot Q_{\rm sell} - \overline{p}^S_{\rm buy} \cdot 2\epsilon$, which is strictly greater than $0$ as long as $\epsilon < (\overline{p}_{\rm sell} - \overline{p}^S_{\rm buy}) \cdot Q_{\rm sell}/ 2\overline{p}^S_{\rm buy}$.
\end{proof}

\Cref{clm:arbitragepossible} effectively
shows that LE precludes
the weaker notion of arbitrage resilience as defined
by~\cite{ammmechdesign}.
Further, \cite{ammmechdesign} also proved
that IC implies 
arbitrage resilience (see their Fact 2.1) in either the plain
model or the weak fair-sequencing model. Therefore,
\Cref{clm:arbitragepossible} 
also implies that the combination of LE + IC 
is impossible. 
More specifically, 
if the world
actually consists only of a single ${\sf Buy}(X)$ order, 
then a strategic user or miner  
can always inject one ${\sf Sell}(X)$ and one ${\sf Buy}(X)$ 
order --- as long as the 
strategic player's 
${\sf Buy}(X)$ order is the one favored in the tie-breaking,  
the strategic
player can 
gain positively in both $X$ and $Y$. %
Note that in the plain model, because a strategic 
miner can control 
anyone's auxiliary information field $\aux$, it can control the tie-breaking. 
Even in the weak fair-sequencing model, if the strategic
miner or user's intrinsic arrival time is ahead of everyone else ---
this can happen if the strategic player's network is faster than
everyone else --- it can then rank its arrival time at any position
which effectively controls the tie-breaking. 
\end{proof}

\begin{remark}
Chan et al.~\cite{ammmechdesign}
showed only feasibility results, and therefore they allowed the strategic
player to arbitrarily modify anyone's auxiliary information field $\aux$ in the plain model. 
This make their feasibility result (for the weaker arbitrage resilience property)
stronger. 
Because we are proving an impossibility result, it will strengthen the result
if we make the strategy space in the plain model more restricted. 
In practice, one natural scenario is that the the auxiliary information
field $\aux$
encodes the user's cryptographic identity such as public key. 
In this case, although the strategic player cannot arbitrarily modify honest users' public keys, the above impossibility still holds: 
suppose that honest users sample their keys
at random, then each time  
the strategic player samples a random key, there is a $1/2$ probability
that it will be favored in the tie-breaking.  
Therefore, the strategic player can simply perform rejection-sampling
till it finds a key that is favored in the tie-breaking.
The expected number of trials needed is $2$. 
\end{remark}

\subsection{UP + LE $\Longrightarrow$ Impossible}

We now prove that UP and LE are not compatible --- since this impossibility
does not involve IC, it holds regardless of the strategy space.  

\begin{theorem}[UP + LE $\Longrightarrow$ impossible]\label{theorem: UP + LE impossible}
No AMM mechanism can simultaneously satisfy uniform pricing and local efficiency.
\end{theorem}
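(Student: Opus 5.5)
Since the statement does not involve IC, I would give a single explicit input on which every outcome satisfying UP fails LE. The natural candidate is a pair of orders on the same side, both with rate $\infty$ so that LE forces them to be fully executed. The quantities are chosen so that the pool must move, and the pool's average price then conflicts with the ending marginal rate. The simplest version uses just one ${\sf Buy}(X)$ order with rate $\infty$ and quantity $\qty>0$ and nothing else. Then I would add a ${\sf Sell}(X)$ order, placed so that UP forces a single price that is incompatible with the pool curve.

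\textbf{Step 1 (forced execution).} Take one ${\sf Buy}(X)$ order with rate $\infty$ and quantity $Q_{\rm buy}$, and one ${\sf Sell}(X)$ order with quantity $Q_{\rm sell}<Q_{\rm buy}$. Set the sell order's rate to $r^*$, the ending market rate reached after buying $Q_{\rm buy}-Q_{\rm sell}$ units of $X$ from the pool. Exactly as in the first claim in the proof of \Cref{theorem: IC + LE impossible}, LE forces both orders to be fully executed. The buy order has infinite rate, so it must be fully filled. If the sell order were only partially filled, the pool would supply strictly more than $Q_{\rm buy}-Q_{\rm sell}$ units. By increasing marginal cost, the ending rate would then exceed $r^*$, which violates LE for the unfilled sell order.

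\textbf{Step 2 (UP contradiction).} Under UP both orders trade at a common price $p$. The sell order's individual rationality gives $p\ge r^*$. Conservation of $Y$ says the pool receives $p\,(Q_{\rm buy}-Q_{\rm sell})$ units of $Y$ for $Q_{\rm buy}-Q_{\rm sell}$ units of $X$: the buyer pays $pQ_{\rm buy}$ and the seller receives $pQ_{\rm sell}$. Conformance to the potential function says that this payment equals the integral of the marginal rate from $r_0$ up to $r^*$. By strict increasing marginal cost, the average price paid to the pool is strictly less than the ending rate $r^*$ (the same observation used in \Cref{clm:buysmaller}). Hence $p<r^*$, which contradicts $p\ge r^*$. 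So no mechanism satisfies both UP and LE on this input.

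The main obstacle is getting Step 2 tight. I need that the average pool price over a nonzero purchase is strictly below the ending marginal rate. This follows from strict convexity of $Y(X)$, i.e., the strict inequality in the increasing marginal cost property, together with $Q_{\rm buy}-Q_{\rm sell}>0$. I also need to pin down the bookkeeping showing that the pool's net trade is exactly the buy--sell imbalance at price $p$. Both orders could in principle be filled at different quantities, but Step 1 rules this out. I would double-check that this is the only place where LE is used for the ${\sf Sell}(X)$ order.
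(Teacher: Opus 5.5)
Your proof is correct and is essentially the paper's argument. It uses the same instance (an infinite-rate ${\sf Buy}(X)$ order plus a ${\sf Sell}(X)$ order of smaller quantity), with LE ruling out an under-filled seller and UP plus conformance forcing the common price to equal the pool's average price, which lies strictly below the seller's rate and violates its IR. The only difference is where you place the seller's rate: you take $r_{\rm end}(Q_{\rm buy}-Q_{\rm sell})$, so LE forces a full fill and IR fails there, whereas the paper takes a rate in $(r_{\rm avg}(q_{\rm b}), r_{\rm end}(q_{\rm b}))$, so LE only rules out $x_{\rm s}=0$ and IR fails for every positive fill.
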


\begin{proof}
To show the impossibility, we consider a scenario with two users: one places a ${\sf Buy}(X)$ order and the other places a ${\sf Sell}(X)$ order. 
Suppose the initial pool state is $(X_0, Y_0)$, and the two orders $({\sf Buy}(X), \rate_{\rm b}, \qty_{\rm b})$ and $({\sf Sell}(X), \rate_{\rm s}, \qty_{\rm s})$ satisfy the following three conditions: 
\begin{enumerate}
	\item[(1)] $\rate_{\rm b} = \infty$.
	\item[(2)] $\rate_{\rm s} = -\frac{dY}{dX}(X_0 - \qty_{\rm b}) - \epsilon$, where $\epsilon>0$ is small enough such that $\rate_{\rm s} > \frac{Y(X_0 - \qty_{\rm b}) - Y_0}{\qty_{\rm b}}$. Here, $\frac{Y(X_0 - \qty_{\rm b}) - Y_0}{\qty_{\rm b}}$ is the average exchange price of the ${\sf Buy}(X)$ order when it is fully executed at the initial pool state $(X_0, Y_0)$, and $-\frac{dY}{dX}(X_0 - \qty_{\rm b})$ is the marginal rate of $X$ at the post-execution state. By increasing marginal cost, we naturally have $-\frac{dY}{dX}(X_0 - \qty_{\rm b}) > \frac{Y(X_0 - \qty_{\rm b}) - Y_0}{\qty_{\rm b}}$. So it is equivalent to require $\epsilon \in \left(0, -\frac{dY}{dX}(X_0 - \qty_{\rm b}) - \frac{Y(X_0 - \qty_{\rm b}) - Y_0}{\qty_{\rm b}} \right)$.
	\item[(3)] $\qty_{\rm s} < \qty_{\rm b}$.
\end{enumerate}

Let $x_{\rm b} \in [0, \qty_{\rm b}]$ and $x_{\rm s} \in [0, \qty_{\rm s}]$ denote the outcomes of the two orders, respectively. 
By LE, the ${\sf Buy}(X)$ order which have $\infty$ acceptable rate must be fully executed, i.e., $x_{\rm b} = \qty_{\rm b}$. For $x_{\rm s}$, there are only two cases: $x_{\rm s} = 0$ or $x_{\rm s} \in (0, \qty_{\rm s}]$. Next, we will show that if $x_{\rm s} = 0$, it violates LE; otherwise, it violates IR under UP, which will conclude the proof.
\begin{itemize}[leftmargin=5mm,itemsep=1pt]
\item {\bf Case 1: $x_{\rm s} = 0$.} 
In this case, the ${\sf Buy}(X)$ order must trade with the pool and change the pool state to be $(X', Y') = \left( X_0 - \qty_{\rm b}, Y(X_0 - \qty_{\rm b}) \right)$. By the above condition (2), $-\frac{dY}{dX}(X_0 - \qty_{\rm b}) > \rate_{\rm s}$. Thus, at $(X', Y')$, the ${\sf Sell}(X)$ user can get strictly better by trading with the pool so long as the rate is above $\rate_{\rm s}$, which violates LE.

\item {\bf Case 2: $x_{\rm s} \in (0, \qty_{\rm s}]$.} 
The total amount of $X$ they buy from the pool is $\qty_{\rm b} - x_{\rm s}$, which is positive according to the above condition (3). 
By UP, their average exchange prices are the same, denoted by $\overline{p}$. The ${\sf Buy}(X)$ order spends $\overline{p} \cdot \qty_{\rm b}$ units of $Y$, while the ${\sf Sell}(X)$ order receives $\overline{p} \cdot x_{\rm s}$ units of $Y$. 
So the total amount of $Y$ they put into the pool is $\overline{p} \cdot (\qty_{\rm b} - x_{\rm s})$. 
Let $(X', Y')$ denote the pool's post-execution state. Then we have $(X', Y') = \left(X_0 - (\qty_{\rm b} - x_{\rm s}), Y_0 + \overline{p} \cdot (\qty_{\rm b} - x_{\rm s}) \right)$. 
Thus, 
$$\overline{p} = \frac{Y' - Y_0}{X_0 - X'} = \frac{Y\left( X_0 - (\qty_{\rm b} - x_{\rm s}) \right) - Y_0}{\qty_{\rm b} - x_{\rm s}} < \frac{Y(X_0 - \qty_{\rm b}) - Y_0}{\qty_{\rm b}},$$ which is less than $\rate_{\rm s}$ based on the above condition (2), implying non-IR for the ${\sf Sell}(X)$ order.
\end{itemize}

\end{proof}

\newcommand{\calM}{\mathcal{M}}

\subsection{IC + wLE + UP $\Longrightarrow$ Impossible}
\label{sec:3impossible}

In this section, we show that it is not possible to have a mechanism that satisfies all three properties simultaneously. 

\begin{theorem}[IC + wLE + UP $\Longrightarrow$ impossible]\label{theorem: trilemma}
No AMM mechanism can 
simultaneously satisfy incentive compatibility, weak local efficiency, and uniform pricing. 
This impossibility holds as long as the strategic player
is allowed to misreport its valuation and quantity, 
that is, the impossibility holds even in
the weak fair-sequencing model, and even in  
a permissioned model where fake bids are not allowed. 
\label{thm:3imp}
\end{theorem}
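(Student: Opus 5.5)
The plan is to build a two-order instance in which wLE and UP together pin down the outcome uniquely. I will then show that a strategic seller gains strictly by under-reporting its quantity. This uses only misreporting of quantity: no fake orders, no change to $\aux$, no censoring. So the impossibility holds in the weak fair-sequencing model and in the permissioned model. Fix an initial pool ${\sf Pool}(X_0,Y_0)$ with market rate $\rate_0 = -\frac{dY}{dX}(X_0)$. For $n>0$, let $P(n) := \frac{Y(X_0-n)-Y_0}{n}$ be the average price of buying $n$ units of $X$ from the pool. By increasing marginal cost, $P$ is strictly increasing, $P(n) > \rate_0$, and $P(n)\to \rate_0$ as $n\to 0$.

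\emph{Step 1: the forced outcome.} There is an honest ${\sf Buy}(X)$ order with rate $\infty$ and quantity $\qty_{\rm b}$. There is a ${\sf Sell}(X)$ order (the strategic user) with true type $({\sf Sell}(X), \rate_0, \qty_{\rm s}, \aux)$, where $0<\qty_{\rm s}<\qty_{\rm b}$. Both orders are eligible: the buy rate is $\infty \ge \rate_0$ and the sell rate is $\rate_0 \le \rate_0$.
\begin{enumerate}
\item By wLE, the buy order is fully executed, since the ending rate is finite.
\item If the seller sold $x_{\rm s}<\qty_{\rm s}$, the pool would net-sell $\qty_{\rm b}-x_{\rm s}>0$ units of $X$. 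The ending rate would then be strictly above $\rate_0$, violating wLE for the unfilled eligible seller. Hence $x_{\rm s}=\qty_{\rm s}$.
\item By UP, both orders trade at a common price $\overline p$. Exactly as in Case 2 of the proof of \Cref{theorem: UP + LE impossible}, conformance to the potential gives $\overline p = P(\qty_{\rm b}-\qty_{\rm s})$.
\item IR holds for the seller, since $\overline p>\rate_0$.
\end{enumerate}
The same reasoning applies to any reported sell order $({\sf Sell}(X), \rate_0, s, \aux)$ with $0<s<\qty_{\rm b}$. So every mechanism satisfying wLE and UP must give the seller the outcome $(-s,\ s\,P(\qty_{\rm b}-s))$.

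\emph{Step 2: a profitable deviation.} A ${\sf Sell}(X)$ user has the total-order utility $\mathcal U(x,y) = x\rate_0 + y$, valid whenever $-x\le \qty_{\rm s}$. Reporting quantity $s\le \qty_{\rm s}$ therefore yields utility
\[ f(s) = s\bigl(P(\qty_{\rm b}-s)-\rate_0\bigr), \]
and these outcomes are all within budget. Differentiating,
\[ f'(s) = \bigl(P(n)-\rate_0\bigr) - s\,P'(n), \qquad n=\qty_{\rm b}-s. \]
I will choose $\qty_{\rm s}$ close to $\qty_{\rm b}$, i.e.\ $n$ small. Then the first term tends to $0$, while $s\,P'(n)$ stays bounded away from zero. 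Indeed, $P'(n)$ tends to half the rate at which the marginal price increases at $X_0$, which is positive whenever the curve is strictly curved. Hence $f'(\qty_{\rm s})<0$. Some report $s<\qty_{\rm s}$ then gives $f(s)>f(\qty_{\rm s})$, a strictly better outcome than honest reporting, and IC fails.

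\emph{Main obstacle.} The delicate step is the sign of $f'$ under only the stated assumptions on $\Phi$ (increasing, differentiable, concave). If only strict monotonicity of the marginal price is available, I would avoid derivatives and compare two discrete points directly. Take $n$ small and $s=\qty_{\rm s}-\delta$, and use the convexity-type bound $P(n+\delta)-P(n) \ge c\,\delta$, where $c>0$ comes from $-\frac{dY}{dX}$ being strictly larger on $[X_0-n-\delta, X_0-n]$ than on $[X_0-n, X_0]$. Then choose $n$ small enough that $P(n)-\rate_0 < c(\qty_{\rm s}-\delta)$. As a sanity check, the constant-product case makes this explicit. Finally, I need to confirm that Step 1 applies verbatim to the deviating report, since only the reported $s$ changes and it still satisfies $s<\qty_{\rm b}$.
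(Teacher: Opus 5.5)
Your proof is correct, and it takes a genuinely different and much shorter route than the paper's.

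\textbf{The paper's route.} The paper uses two ${\sf Buy}(X)$ orders, both strictly above $r_0$. In that instance wLE and UP alone do not determine how the total allocation is split. So the paper has to use IC itself to pin the outcome down, in four steps:
\begin{itemize}
\item the forced outcome for $(\infty,\Delta X)$;
\item positivity of both allocations;
\item the scenario-equivalence lemma;
\item the first-order conditions $\frac{\partial f_i}{\partial a_i}=r_i$, obtained from local quantity perturbations.
\end{itemize}
Only then does it exhibit a profitable jump deviation to $(\infty,\Delta X)$, which misreports both rate and quantity. That step uses a limiting Mean Value Theorem argument as $r_1\to r_2^\star$.

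\textbf{Your route.} You place an eligible seller exactly at the boundary rate $r_0$, opposite an $\infty$-rate buyer. Unfilled supply then forbids the pool from net-selling. Hence wLE, UP and conformance alone force the outcome $(-s,\,sP(q_{\rm b}-s))$ for every report $s$. The question collapses to a one-variable supply-reduction problem, the familiar demand-reduction failure of uniform-price auctions.

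\textbf{What each buys.} Your approach gives brevity and does not need IC to determine outcomes. It also gives a slightly stronger conclusion: misreporting quantity alone suffices. The paper's approach shows the trilemma survives even with one-sided order flow, i.e., only ${\sf Buy}(X)$ orders, all strictly eligible. That is exactly the setting of the simple UP + wLE mechanism in \Cref{sec:upwle}. Your construction needs both sides of the market present.

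\textbf{Your ``main obstacle''.} This difficulty is self-inflicted.
\begin{itemize}
\item The derivative route needs $P'(n)$ bounded away from $0$, which the stated assumptions (no twice differentiability) do not give.
\item In your discrete fallback, the constant $c$ depends on $n$ unless $\delta$ is fixed before letting $n\to 0$.
\end{itemize}
You need neither. Fix any $s_0\in(0,q_{\rm b})$, so $f(s_0)>0$. Then observe that $f(q_{\rm s})=q_{\rm s}\bigl(P(q_{\rm b}-q_{\rm s})-r_0\bigr)\to 0$ as $q_{\rm s}\to q_{\rm b}$, since $P(n)\to r_0$ by differentiability of $Y$ at $X_0$. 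Any $q_{\rm s}\in(s_0,q_{\rm b})$ close enough to $q_{\rm b}$ then makes the report $s_0$ strictly better than honest reporting.
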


\begin{proof}
Assume for the sake of contradiction that there is a deterministic mechanism $\mathcal{M}$ that satisfies IC, wLE, and UP.

To show the impossibility, we consider a scenario 
with two users, both of the ${\sf Buy}(X)$ type. 
Suppose the initial pool state
is $(X_0, Y_0)$. 
The two users
place orders 
$({\sf Buy}(X), \rate_1, \qty_1)$
and 
$({\sf Buy}(X), \rate_2, \qty_2)$,
where $r_1,r_2 >-\frac{dY}{dX}(X_0)$ and $q_1,q_2\in [0,\infty)\cup \set{\infty}$. 
Henceforth, without risking ambiguity, we omit explicitly writing the ${\sf Buy}(X)$ field of the order. 

We use $(x_1, y_1) \coloneqq \calM((r_1,q_1),(r_2,q_2))$ and $(x_2, y_2) \coloneqq \calM((r_1,q_1),(r_2,q_2))$ to denote the mechanism's outcome. Here, $x_1, x_2 \geq 0$ represent the amounts of $X$ that the mechanism allocates to the first and second user, respectively, and $y_1, y_2 \leq 0$ represent their net gain in $Y$ (meaning $-y_i \geq 0$ is the amount of $Y$ paid by user $i$). For each order $i \in \set{1, 2}$, the order constraint requires that $0 \leq x_i \leq q_i$, and if $x_i > 0$, the exchange rate must satisfy individual rationality (IR): $- \frac{y_i}{x_i} \leq r_i$.

Given a total allocation $x_1 + x_2$, we define the following pricing notations:
\begin{itemize}
\item 
Let $\rate_{\rm end}(x_1+x_2) = -\frac{dY}{dX}\left(X_0-(x_1+x_2)\right)$ 
be the end marginal price of $X$;
\item 
Let $\rate_{\rm avg}(x_1+x_2) = \frac{Y(X_0-(x_1+x_2))- Y(X_0)}{x_1+x_2}$ 
be the average execution price for the total of $x_1 + x_2$ units purchased. If $x_1+x_2=0$, then we define $\rate_{\rm avg}(0) = -\frac{dY}{dX}(X_0)$.
\end{itemize}

Fix an arbitrary $r_2 > -\frac{dY}{dX}(X_0)$ in the remainder of the proof. Because $r_{\rm avg}(\cdot)$ is a continuous and strictly increasing function, there exists a unique volume $\Delta X > 0$ such that $r_{\rm avg}(\Delta X) = r_2$. 
Let $\Delta Y \coloneqq Y(X_0 - \Delta X) - Y_0$ be the exact amount of $Y$ required to purchase $\Delta X$ units of $X$ from the pool. By definition, we have $\frac{\Delta Y}{\Delta X} = r_2$. Furthermore, let $r_2^\star = r_{\rm end}(\Delta X)$ denote the marginal price at the state $(X_0 - \Delta X, Y_0 + \Delta Y)$. 
By the strictly increasing marginal cost of the AMM, we naturally have $r_2^* > r_{\rm avg}(\Delta X) = r_2$.

First, we characterize the mechanism's behavior under a specific scenario.

\begin{lemma}\label{lemma:deviation}
    Given two orders $(\infty, \Delta X)$ and $(r_2,\infty)$ such that $r_{\rm avg}(\Delta X) = r_2$, the mechanism must output $(x_1, y_1) = \left(\Delta X, -\Delta Y \right)$ and $(x_2, y_2) = (0,0)$.
\end{lemma}
\begin{proof}
	Because the first user specifies an infinite valuation, by wLE, their finite demand must be fully fulfilled. Furthermore, the mechanism should not allocate a user more than their demand. Thus, $x_1 = \Delta X$. 
    
    Suppose towards a contradiction that $x_2 > 0$. Then the total amount of $X$ withdrawn from the pool, $x_{\rm tot} = x_1 + x_2$, must strictly exceed $\Delta X$. 
    By UP, the exchange rate for both users equals $r_{\rm avg}(x_{\rm tot})$. 
    Because $r_{\rm avg}(\cdot)$ is strictly increasing, $r_{\rm avg}(x_{\rm tot}) > r_{\rm avg}(\Delta X) = r_2$, which violates the second user's IR condition. Thus, we must have $x_2 = 0$ and $y_2 = 0$. This finishes the proof.
\end{proof}

Next, we focus on scenarios with two orders $(r_1, \infty)$ and $(r_2, \infty)$, where $r_1 \in (r_2, r_2^\star)$.  
Let the mechanism's outcome be $(x_1, y_1)$ and $(x_2, y_2)$. 
Because both users specify an infinite budget, by wLE, the ending state after execution must have a marginal price $r_{\rm end}(x_1 + x_2) \geq r_1$. 
By IR, $r_{\rm avg}(x_1 + x_2) \leq r_1$. Thus, the total allocation $x_1 + x_2$ is bounded.

\begin{restatable}{lemma}{lemmaxidayuling}\label{lemma: x_i>0}
    For any $r_1 \in (r_2, r_2^\star)$, consider two orders $(r_1, \infty)$ and $(r_2, \infty)$. Let the mechanism's outcome be $(x_1, y_1)$ and $(x_2, y_2)$. Then $x_1 > 0$ and $x_2>0$.
\end{restatable}

For the flow of the proof, we defer the proof of \Cref{lemma: x_i>0} until the end. The intuition is as follows: if $x_1 = 0$, then the first user has an incentive to misreport as $(\infty,\Delta X)$; if $x_2 = 0$, then the second user has an incentive to misreport as $(\infty,\epsilon)$.

Now we further characterize the mechanism's outcome.

\begin{lemma}\label{lemma:equivalence}
    For any $r_1 \in (r_2, r_2^\star)$, consider two orders $(r_1, \infty)$ and $(r_2, \infty)$, denoted by Scenario 1. Let the mechanism's outcome be $(x_1, y_1)$ and $(x_2, y_2)$ under Scenario 1.
    Then, under the following two alternative scenarios, the mechanism must output the same outcome, i.e., $(x_1,y_1)$ and $(x_2,y_2)$.
    \begin{itemize}[itemsep=1pt]
        \item Scenario 2: The first user bids 
            $(\rate_1, \infty)$
            and the second user bids $(\infty, x_2)$.
            \item Scenario 3: The first user bids $(\infty, x_1)$ and the second user bids $(\rate_2, \infty)$. 
    \end{itemize}
\end{lemma}
\begin{proof}
    In Scenario~2, by wLE, the second user with infinite valuation must be allocated the same $x_2>0$ units of $X$ as in Scenario~1. 
    Suppose towards a contradiction that the allocation to the first user is not $x_1$ in Scenario~2. 
    If the first user gets strictly less than $x_1$, then the average execution price of the two orders is smaller in Scenario 2 than in Scenario 1 due to UP and the increasing marginal cost property of $\Phi$. Thus, the second user pays less and gains more utility in Scenario 2. 
    In this case, if the second user's true type is $(\rate_2, \infty)$, then it is strictly better off misreporting its type as $(\infty, x_2)$, which violates IC. 
    Similarly, if the first user gets strictly more than $x_1$ in Scenario 2, then when the second user's true type is $(\infty, x_2)$, it is strictly better off misreporting its type as $(\rate_2, \infty)$, which again violates IC.
    
    The equivalence of Scenario 3 follows from symmetry.
\end{proof}

Note that under UP, the amounts in $Y$ paid by each user can be uniquely determined by the allocation $x_1$ and $x_2$. In particular, we denote the amount in $Y$ that the first user pays as $f_1(x_1,x_2)$ and the second user pays as $f_2(x_1,x_2)$, where
\[
f_1(a_1,a_2)\coloneqq a_1\cdot \frac{Y(X_0-(a_1+a_2))-Y(X_0)}{a_1+a_2}\quad\text{and}\quad f_2(a_1,a_2)\coloneqq a_2\cdot \frac{Y(X_0-(a_1+a_2))-Y(X_0)}{a_1+a_2}.
\]

\noindent\textbf{Notation Remark.} Throughout the proof, $x_1$ and $x_2$ denote allocations. 
In defining the functions $f_1$ and $f_2$, we use $a_1$ and $a_2$ 
as dummy arguments, solely to distinguish the arguments of these functions 
from the allocation variables. Thus, for example, $f_i(x_1,x_2)$ denotes 
$f_i$ evaluated at the allocation $(x_1,x_2)$, and 
$\frac{\partial f_i}{\partial a_j}$ denotes the partial derivative of $f_i$ 
with respect to its $j$-th argument. For example, $\frac{\partial f_1}{\partial a_1}(x_1,x_2)$ denotes the partial derivative of $f_1$ 
with respect to its first argument evaluated at $(x_1,x_2)$.

\begin{lemma}
\label{lemma: upconstraint}
    Suppose that given the orders $(r_1, \infty)$ and $(\infty, x_2)$, the mechanism outputs $(x_1, y_1)$ and $(x_2, y_2)$, where $x_1>0$. Then we must have $\frac{\partial f_1}{\partial a_1}(x_1,x_2)=r_1$.
\end{lemma}
\begin{proof}
Assume towards contradiction that $\frac{\partial f_1}{\partial a_1}(x_1,x_2)\neq r_1$. We will show that the first user has incentives to misreport and obtain a strictly better outcome.

We consider a deviation where the first user misreports their type as $(\infty, x_1 + dx)$ for some arbitrarily small $dx$. 
Note that $dx$ can be positive or negative. 
Under this deviation, both users have infinite valuations and finite demands. 
By wLE, both orders must be fully fulfilled. 
Crucially, no matter how the first user misreports $(\infty,x_1+ dx)$, the second user receives the same amount of $X$, which is $x_2$ (though its payment in $Y$ could vary).

If $\frac{\partial f_1}{\partial a_1}(x_1,x_2)<r_1$, then it is strictly better for the first user to misreport $(\infty, x_1 + dx)$ for some sufficiently small but positive $dx$, which violates IC. 

If $\frac{\partial f_1}{\partial a_1}(x_1,x_2)>r_1$, then it is strictly better for the first user to misreport $(\infty, x_1 + dx)$ for some sufficiently small but negative $dx$, which again violates IC. 

Thus, to satisfy IC, we must have $\frac{\partial f_1}{\partial a_1}(x_1,x_2)=r_1$. This finishes the proof.
\end{proof}

By a direct symmetric argument, we have the following lemma as well.
\begin{lemma}
\label{lemma: upconstraint 2222}
Suppose that given the orders $(\infty, x_1)$ and $(r_2,\infty)$, the mechanism outputs $(x_1, y_1)$ and $(x_2, y_2)$, where $x_2>0$. 
Then we must have $\frac{\partial f_2}{\partial a_2}(x_1,x_2)=r_2$. 
\end{lemma}

Recall that $f_1(a_1,a_2) = a_1\cdot \frac{Y(X_0-(a_1+a_2))-Y(X_0)}{a_1+a_2}$. By a simple calculation, we observe that
\begin{align*}
\frac{\partial f_1}{\partial a_1}(a_1,a_2) &= \frac{Y(X_0-(a_1+a_2))-Y(X_0)}{a_1+a_2} \\
&\quad\quad\quad\quad + a_1\cdot \left( \frac{-\frac{dY}{dX}(X_0-(a_1+a_2))}{a_1+a_2} - \frac{Y(X_0-(a_1+a_2))-Y(X_0)}{(a_1+a_2)^2} \right) \\
&= r_{\rm avg}(a_1+a_2) + \frac{a_1}{a_1+a_2}(r_{\rm end}(a_1+a_2)-r_{\rm avg}(a_1+a_2))\\
&= \frac{a_1}{a_1+a_2}r_{\rm end}(a_1+a_2) + \frac{a_2}{a_1+a_2}
r_{\rm avg}(a_1+a_2).
\end{align*}

Symmetrically, we have $$\frac{\partial f_2}{\partial a_2}(a_1,a_2) = \frac{a_2}{a_1+a_2}r_{\rm end}(a_1+a_2) + \frac{a_1}{a_1+a_2} r_{\rm avg}(a_1+a_2).$$

\Cref{lemma: upconstraint} and~\Cref{lemma: upconstraint 2222} tell us that 
\begin{equation}\label{eq:uniqueness}
\left\{
    \begin{aligned}
    \quad \frac{x_1}{x_1+x_2}r_{\rm end}(x_1+x_2) + \frac{x_2}{x_1+x_2} r_{\rm avg}(x_1+x_2) &= r_1 \quad\text{and}\\
    \quad \frac{x_2}{x_1+x_2}r_{\rm end}(x_1+x_2) + \frac{x_1}{x_1+x_2} r_{\rm avg}(x_1+x_2) &= r_2.
    \end{aligned}
\right.
\end{equation}

Summing these yields $r_{\rm end}(x_1+x_2) + r_{\rm avg}(x_1+x_2) = r_1 + r_2$. 
Since $r_{\rm end}(\cdot) + r_{\rm avg}(\cdot)$ is a continuous and strictly increasing function, given $r_1$ and $r_2$, the total allocation $x_{\rm tot} = x_1 + x_2$ is uniquely determined. 
Further, solving the equations of~\Cref{eq:uniqueness} gives 
\begin{equation}\label{eq:x1expression}
    x_1 = x_{\rm tot} \cdot  \frac{r_1 - r_{\rm avg}(x_{\rm tot})}{r_{\rm end}(x_{\rm tot}) - r_{\rm avg}(x_{\rm tot})}, 
    \qquad x_2 = x_{\rm tot} - x_1,
\end{equation}
meaning that given the two orders $(r_1, \infty)$ and $(r_2, \infty)$, the mechanism's outcome is unique. 
Additionally, $r_{\rm end}(x_{\rm tot}) + r_{\rm avg}(x_{\rm tot}) = r_1 + r_2 < r_2^\star + r_2$, where ``$<$'' is due to the construction that $r_1 < r_2^\star$. This implies that the outcome must satisfy $x_{\rm tot} < \Delta X$ and $r_{\rm end}(x_{\rm tot}) > r_1$. 

We are now ready to derive the contradiction. 
Suppose the true types are $(r_1, \infty)$ and $(r_2, \infty)$. 
By the definition of utility ranking for ${\sf Buy}(X)$ orders, the partial-ordering of the first user's outcomes coincides with the direct comparison of their quasilinear utilities. 
Under truthful reporting, the first user's utility is $U_1 = x_1 \cdot (r_1 - r_{\rm avg}(x_{\rm tot}))$. 
However, if the first user misreports their type as $(\infty, \Delta X)$, by~\Cref{lemma:deviation}, the utility under this deviation is $U_1' = \Delta X \cdot (r_1 - r_2)$. 
The following will show that for $r_1$ sufficiently close to $r_2^\star$, we have $U_1' > U_1$, violating IC.

First, we establish an exact algebraic form for~$U_1$. Substituting the expression for~$x_1$ in~\Cref{eq:x1expression} into~$U_1$, we obtain
\begin{align*}
U_1 
&= x_{\rm tot} \cdot  \frac{\left(r_1 - r_{\rm avg}(x_{\rm tot})\right)^2}{r_{\rm end}(x_{\rm tot}) - r_{\rm avg}(x_{\rm tot})} 
= x_{\rm tot} \cdot \frac{\left(r_{\rm end}(x_{\rm tot}) - r_2\right)^2}{r_{\rm end}(x_{\rm tot}) - r_{\rm avg}(x_{\rm tot})} \\
&= x_{\rm tot} \cdot \big(r_{\rm end}(x_{\rm tot}) + r_{\rm avg}(x_{\rm tot}) - 2r_2\big) + x_{\rm tot} \cdot \frac{(r_2 - r_{\rm avg}(x_{\rm tot}))^2}{r_{\rm end}(x_{\rm tot}) - r_{\rm avg}(x_{\rm tot})}    \\
&= x_{\rm tot} \cdot \left(r_1 - r_2\right) + x_{\rm tot} \cdot \frac{(r_2 - r_{\rm avg}(x_{\rm tot}))^2}{r_{\rm end}(x_{\rm tot}) - r_{\rm avg}(x_{\rm tot})}.
\end{align*}

Thus, the exact difference in utility between the deviation and the truthful report is:
$$U_1' - U_1 = (\Delta X - x_{\rm tot}) \cdot \left(r_1 - r_2\right) - x_{\rm tot} \cdot \frac{(r_2 - r_{\rm avg}(x_{\rm tot}))^2}{r_{\rm end}(x_{\rm tot}) - r_{\rm avg}(x_{\rm tot})}.$$
Let $\delta = \Delta X - x_{\rm tot}$. Since $x_{\rm tot} < \Delta X$, we know $\delta > 0$.

Because the potential function $\Phi$ is differentiable and concave, $Y(X)$ is continuously differentiable, ensuring $r_{\rm avg}(\cdot)$ is continuously differentiable. By the Mean Value Theorem, since $r_2 = r_{\rm avg}(\Delta X)$, there exists some $\xi \in (x_{\rm tot}, \Delta X)$ such that $r_2 - r_{\rm avg}(x_{\rm tot}) = r_{\rm avg}'(\xi) \cdot \delta$. Substituting this into the difference equation allows us to obtain
$$U_1' - U_1 = \delta \cdot \left( (r_1 - r_2) - \delta \cdot K(x_{\rm tot}) \right),$$
where $K(x_{\rm tot}) = x_{\rm tot} \cdot \frac{(r_{\rm avg}'(\xi))^2}{r_{\rm end}(x_{\rm tot}) - r_{\rm avg}(x_{\rm tot})}$.

To prove the deviation is profitable, we must show that for some choice $r_1\in(r_2,r_2^*)$, the term $(r_1 - r_2) - \delta \cdot K(x_{\rm tot})$ is strictly positive. Consider the limit as $r_1 \to r_2^\star$. As $r_1$ approaches $r_2^\star$, $x_{\rm tot}$ approaches $\Delta X$ due to $r_{\rm end}(x_{\rm tot}) > r_1$, which forces $\delta \to 0$. 

As $x_{\rm tot} \to \Delta X$, the denominator of $K(x_{\rm tot})$ approaches $r_{\rm end}(\Delta X) - r_{\rm avg}(\Delta X) = r_2^\star - r_2 > 0$. 
Concurrently, by the Squeeze Theorem, $\xi \to \Delta X$, meaning $(r_{\rm avg}'(\xi))^2 \to (r_{\rm avg}'(\Delta X))^2$. Thus, $K(x_{\rm tot})$ converges to a finite positive constant. 

As $r_1 \to r_2^\star$, the linear term $(r_1 - r_2)$ converges to $(r_2^\star - r_2) > 0$. Since $(r_1 - r_2)$ approaches a strictly positive constant while $\delta \cdot K(x_{\rm tot})$ approaches $0$, there exists a  $r_1$ sufficiently close to $r_2^\star$ such that $(r_1 - r_2) > \delta \cdot K(x_{\rm tot})$.

For this $r_1$, we have $U_1' - U_1 > 0$, which breaks IC.

This finishes the proof of \Cref{theorem: trilemma} modulo \Cref{lemma: x_i>0}. We restate and prove \Cref{lemma: x_i>0} below.

\lemmaxidayuling*
\begin{proof}[Proof of \Cref{lemma: x_i>0}]
    Suppose towards a contradiction that $x_1 = 0$, which implies $y_1 = 0$. If the first user's true type is $(r_1, \infty)$, then it is strictly better off misreporting $(\infty, \Delta X)$. By~\Cref{lemma:deviation}, this deviation will bring a strictly better utility, which violates IC.  

    Suppose towards a contradiction that $x_2 = 0$. Consider the second user deviates from $(r_2,\infty)$ to $(\infty,\epsilon)$ for some sufficiently small $\epsilon>0$. 
    Since the second user didn't gain anything by truthfully reporting, to show that the deviation is profitable, it suffices to show that the second user's utility is strictly positive by misreporting $(\infty,\epsilon)$. Intuitively, what we show in the following is that the second user will get exactly $\epsilon>0$ amount of $X$ tokens, and the unit price for this is strictly less than $r_2$, implying a positive utility.
    
    Formally, let $(x_1',y_1')$ and $(x_2',y_2')$ be the outcome of the mechanism given the two orders $(r_1, \infty)$ and $(\infty,\epsilon)$. 
    Then by wLE, $x_2'=\epsilon$. 
    If $x_1'=0$, then only the second user trades with the curve for an $\epsilon$ amount of $X$ tokens; thus, the second user clearly has positive utility. 
    If $x_1'>0$, 
    it suffices to show that the uniform price satisfies $-(y_1'+y_2')/(x_1'+x_2')< r_2$. By \Cref{lemma: upconstraint} (note that the proof of~\Cref{lemma: upconstraint} only needs the fact that $x_1'>0$)\footnote{The logical dependencies among the proofs are as follows. By~\Cref{lemma:deviation}, we can prove $x_1 > 0$. It then allows us to prove~\Cref{lemma: upconstraint}. Using~\Cref{lemma: upconstraint}, we prove $x_2 > 0$, which further leads to the result of~\Cref{lemma: upconstraint 2222}. With $x_1, x_2 > 0$, we prove~\Cref{lemma:equivalence}. For better intuition and readability, we instead organize the proofs as shown in the paper.}, we know that $(x_1',x_2')$ must satisfy $\frac{\partial f_1}{\partial a_1}(x_1',x_2')=r_1$. That is, 
\[
r_1=\frac{x_1'}{x_1'+x_2'}r_{\rm end}(x_1'+x_2') + \frac{x_2'}{x_1'+x_2'}
r_{\rm avg}(x_1'+x_2').
\]
When $x_2'=\epsilon$ is sufficiently small, we know that $x_1'+x_2'$ approaches to $x^*$ where $r_{\rm end}(x^*)=r_1$. Since $r_1<r_2^*$, this implies $r_{\rm avg}(x^*)<r_2$. Thus, the uniform price is strictly less than $r_2$, and the second user can obtain positive utility, violating the IC condition.
\end{proof}

This concludes the proof.
\end{proof}

\section{Feasibility of Satisfying Any Two Properties}
\label{sec:feasible}
In this section, we demonstrate that for any two properties among IC, wLE, and UP, there exists a mechanism that satisfies them. We examine each pair separately below.

\subsection{IC + wLE}
Chan, Wu, and Shi~\cite{ammmechdesign}
proposed a neat mechanism that 
satisfies IC and wLE in the weak fair-sequencing model. 
We stress 
that all our impossibility results (\Cref{theorem: IC + LE impossible}, \Cref{theorem: UP + LE impossible}, \Cref{theorem: trilemma}) hold in the weak fair-sequencing model.  
An important open question 
left is whether we can 
remove the fair-sequencing assumption
and achieve IC + wLE in the plain model.

\subsection{IC + UP} 
\label{sec:icup}
Note that if one is only interested in incentive compatibility and uniform pricing, then the feasibility is trivial since there is no efficiency requirement and a naive mechanism could always do nothing. Our goal in this part is to show an alternative interesting mechanism that we have found, formally shown in \Cref{mechanism:IC + UP}.

At a high level, Mechanism~\ref{mechanism:IC + UP} works as follows: Discard all ${\sf Sell}(X)$ or ${\sf Buy}(Y)$ orders. 
The remaining orders are either 
${\sf Buy}(X)$ or ${\sf Sell}(Y)$.
Now, further discard the ${\sf Buy}(X)/{\sf Sell}(Y)$ orders whose declared exchange rate $r < -\frac{dY}{dX}(X_0)$. 
Next, rank the remaining orders based on their declared exchange rate,
and let $r_1$ and $r_2$ denote the highest and the second-highest exchange rate, respectively.  
Let~$x$ be the number of units of $X$ that must be taken from the pool such that the average price paid is equal to $r_2$. 
If the first user
can absorb $x$ units of $X$ without exceeding its declared quantity or budget, then 
allocate $x$ units of $X$ to the first user, and allocate~$0$ to everyone else. Otherwise, allocate~$0$ to everyone. 

The nice property of Mechanism~\ref{mechanism:IC + UP} is that it always satisfies IC and UP in the weak fair-sequencing model (here, the fair sequencing assumption helps rule out censorship). Furthermore, in certain cases, it also achieves wLE. 
\begin{theorem}\label{theorem: IC + UP possible}
    Mechanism \ref{mechanism:IC + UP} satisfies IC and UP in the weak fair-sequencing model. Furthermore, if there are only ${\sf Buy}(X)$ and ${\sf Sell}(Y)$ orders, and the user with the highest exchange rate has a sufficiently large quantity or budget, then the mechanism also achieves wLE.
\end{theorem}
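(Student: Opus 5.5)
The proof has three parts, taken in order: UP, then IC (the main work), then wLE in the special case. Throughout, write the mechanism's outcome in terms of the two highest eligible declared rates $r_1 \ge r_2$. Let $x^\ast(r_2)$ denote the unique quantity with $r_{\rm avg}(x^\ast)=r_2$; it exists because $r_{\rm avg}$ is continuous and strictly increasing from $-\frac{dY}{dX}(X_0)$. The only possible trade is that the top-ranked user buys $x^\ast(r_2)$ from the pool and pays $x^\ast(r_2)\,r_2$ in $Y$. If there is only one eligible order, take $r_2=-\frac{dY}{dX}(X_0)$, so $x^\ast=0$.

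\textbf{UP.} At most one order is ever executed, so every executed order trades at the same rate $r_2$.

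\textbf{Well-formedness.} Individual rationality holds because $r_2\le r_1$. Conformance to the pricing function holds by construction.

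\textbf{IC.} This is a second-price (Vickrey-style) argument. The key observation is that the winner's allocation and price depend only on the \emph{other} orders, through $r_2$, and not on the winner's own bid beyond the feasibility check. Fix a strategic player $u$ with true type $T$ and the honest orders $\mathbf b_{-u}$. Because of weak fair-sequencing, $u$ cannot censor or reorder honest orders; it can only add its own orders and delay them. Let $\hat r$ be the highest eligible rate among $\mathbf b_{-u}$.

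1. Any outcome $u$ can obtain from a deviation $\mathbf b'$ takes one of two forms. Either $u$ gets nothing, or one of $u$'s own orders wins with second price $r_2' \ge \hat r$. The bound $r_2'\ge\hat r$ holds because honest orders cannot be removed, and any fake orders of $u$ can only raise the second price. In the winning case $u$ receives $x^\ast(r_2')$ at unit price $r_2'$.
2. Compare with the truthful outcome. If $u$'s true type is ${\sf Sell}(X)$ or ${\sf Buy}(Y)$, any allocation it receives is in the wrong direction, so the mechanism gives it nothing useful. Discarding its order is then weakly optimal, and any gain is ruled out by the no-free-lunch condition and the utility ranking.
3. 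Now let $u$ be of type ${\sf Buy}(X)$ with rate $r$ and quantity $q$. If $r<\hat r$, any win occurs at price $r_2'\ge \hat r>r$, so $u$'s utility is negative under both inequalities of the partial order. That is worse than the truthful outcome of $0$.
4. If $r\ge\hat r$, a truthful bid wins and receives $x^\ast(\hat r)$ at price $\hat r$, provided $x^\ast(\hat r)\le q$. Any deviation gives $x^\ast(r_2')$ at $r_2'\ge\hat r$. Since $x\mapsto x(r-r_{\rm avg}(x))$ is maximized where the end rate equals $r$, and $x^\ast(\hat r)\le x^\ast(r)$, we need to check that winning more units at a higher average price is not strictly better under \emph{both} parts of the partial ordering. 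The capped utility $\min\{x,q\}r+y$ decreases once $x>q$, and the uncapped one is monotone in the correct direction for $r_2'\ge\hat r$. If $x^\ast(\hat r)>q$, the truthful outcome is $0$. Then any profitable deviation gets more than $q$ units at a price of at least $\hat r$, and the capped comparison, or incomparability, blocks it.

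The same argument applies to ${\sf Sell}(Y)$ with a budget in place of the quantity.

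\textbf{Main obstacle.} I expect the step most likely to break to be the case $x^\ast(\hat r)>q$ in item 4, together with ties at equal rates. Here I would have to argue carefully that every deviation's outcome is either worse or incomparable under the partial order. I would try the direct inequality $\min\{x,q\}r - x r_2' \le q(r-r_2')\le 0$ when $r_2'\ge r$, and use incomparability otherwise.

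\textbf{wLE.} With only ${\sf Buy}(X)$ and ${\sf Sell}(Y)$ orders and a top user whose quantity or budget is large enough, the top user receives $x^\ast(r_2)$. The end rate is $r_{\rm end}(x^\ast(r_2))>r_2$. Every other eligible order that is not completely fulfilled has rate at most $r_2$, so the LE condition holds for it. The top user is only partially fulfilled because the mechanism restricts what it may trade, so its condition also has to be checked; I expect this to require either its quantity to be exactly filled, or the ``sufficiently large'' hypothesis to be interpreted accordingly.
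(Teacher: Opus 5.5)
There is a genuine gap: you are analyzing a different mechanism from the one in the statement. In Mechanism~1 the winner does not just receive $x(r_2)$ units, where $r_{\rm avg}(x(r_2))=r_2$, at unit price $r_2$. Step 5(a) allocates $x^\ast=\max(x,\min(x',q_1))$ with $r_{\rm end}(x')=r_1$, and charges the pool cost $C(x^\ast):=Y(X_0-x^\ast)-Y(X_0)$. So the winner keeps trading past $x(r_2)$ until its marginal rate or its quantity binds. Your version matches the informal paragraph before the box, not the box.

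This breaks both halves of your argument. For your fixed-quantity version, the claim in item~4 that the uncapped utility is ``monotone in the correct direction for $r_2'\ge\hat r$'' is false. The map $z\mapsto rz-C(z)$ is strictly increasing on $[0,x']$. So if $r_{\rm end}(x(\hat r))<r$ and $q$ is large, the player can add a fake ${\sf Buy}(X)$ order at a rate $s\in(\hat r,r)$, which the weak fair-sequencing model allows. Its real order still wins but now receives $x(s)>x(\hat r)$ units. This is strictly better under both inequalities, which coincide while the quantity is at most $q$. For example, take $\Phi=XY$, $X_0=Y_0=1$, an honest order $({\sf Buy}(X),2,100)$, and true type $({\sf Buy}(X),10,100)$: truthful utility is $4$, and with $s=3$ it is $14/3$. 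Your wLE paragraph fails for the same reason: a top user with large $q_1$ is left partially filled at end rate $r_{\rm end}(x(r_2))$, which can be below $r_1$.

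The missing idea is what the paper's proof relies on. Given the others' bids, the winner faces the menu $\{(z,-C(z)) : z\ge x(r_2)\}$. The mechanism picks the maximizer of the concave $rz-C(z)$ over $[x(r_2),q_1]$. Fake orders can only raise $r_2$ and hence only shrink the menu, and misreporting $(r_1,q_1)$ merely selects another point of it. With $q_1$ large the end rate is $\max(r_{\rm end}(x),r_1)\ge r_1$, which gives wLE.

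Second, that menu argument needs the true quantity to lie in the menu, i.e.\ $q\ge x(\hat r)$. Your flagged obstacle $q<x(\hat r)$ cannot be closed by ``incomparability otherwise''. This case is a counterexample to IC for Mechanism~1 as specified, not merely a weakness of your route.
Truthful reporting yields $(0,0)$ by step 5(c). Reporting the same rate $r>\hat r$ with quantity $x(\hat r)$ yields $x^\ast=x(\hat r)$ units for $x(\hat r)\hat r$ units of $Y$. The uncapped comparison is strict, and the capped one, $0\le qr-x(\hat r)\hat r$, holds once $q\ge x(\hat r)\hat r/r$. In the example above with true type $({\sf Buy}(X),10,0.4)$, truthful reporting gives nothing, while reporting quantity $1/2$ gives $(1/2,-1)$ with capped utility $3>0$. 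The paper's proof dismisses over-reporting of $q_1$ in one sentence and does not treat this case, so it cannot be closed without modifying the mechanism or restricting the claim.

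Likewise your item~2, which mirrors the paper, is too quick. Under the paper's ranking a ${\sf Sell}(X)$ type has utility $xr+y$ also when $x>0$. So in the same example a ${\sf Sell}(X)$ type with rate $10$ gets utility $4$ by posting $({\sf Buy}(X),10,1/2)$ instead of its discarded order.
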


\begin{figure*}[t]
\centering
\begin{tcolorbox}[colframe=black, colback=white, width=\textwidth, boxrule=0.8pt]

\textbf{Input:} The initial pool state $(X_0, Y_0)$, the potential function $\Phi$, and a set of orders ${\bf b}$. Since the mechanism does not make use of the auxiliary information field $\aux$, each order is simply represented as a tuple $(\tp, \rate, \qty)$.

\vspace{0.5em}
\textbf{Output:} A net gain in assets $X$ and $Y$ for each order.

\vspace{1em}
\textbf{Mechanism:}
\begin{enumerate}
    \item Discard all ${\sf Sell}(X)$ or ${\sf Buy}(Y)$ orders. 
    \item Discard all ${\sf Buy}(X)/{\sf Sell}(Y)$ orders whose declared exchange rate $r < -\frac{dY}{dX}(X_0)$.
    \item If no orders remain, then do nothing. Otherwise, go to the next step.
    \item Rank the remaining orders in descending order of their declared exchange rate. Let $(t_1,r_1,q_1)$ be the order with the highest exchange rate and $(t_2,r_2,q_2)$ the order with the second-highest exchange rate.  
    If multiple orders share the highest exchange rate, which implies $r_1 = r_2$, rank them arbitrarily. 
    If only one order remains, set $r_2=-\frac{dY}{dX}(X_0)$.
    \item Let $x$ be such that $\frac{Y(X_0-x)-Y(X_0)}{x}=r_2$. Let $x'$ be such that $-\frac{dY}{dX}(X_0-x')=r_1$.
    \begin{enumerate}
        \item If $t_1={\sf Buy}(X)$ and $q_1\geq x$, allocate $x^*=\max(x,\min(x',q_1))$ units of $X$ to the first user (the one with rate $r_1$), and charge $Y(X_0-x^*)-Y(X_0)$ units of $Y$. All other users receive zero and pay nothing.
        \item If $t_1={\sf Sell}(Y)$ and $\Delta x \coloneqq X(Y_0) - X(Y_0 + q_1) \geq x$, allocate $x^*=\max(x,\min(x',\Delta x))$ units of $X$ to the first user (the one with rate $r_1$), and charge $Y(X_0-x^*)-Y(X_0)$ units of $Y$. All other users receive zero and pay nothing.
        \item Otherwise, all users receive zero and pay nothing.
    \end{enumerate}
\end{enumerate}
\end{tcolorbox}
\caption{Mechanism~$1$ satisfying IC + UP in the weak fair-sequencing model}
\label{mechanism:IC + UP}
\end{figure*}

\begin{proof}[Proof of \Cref{theorem: IC + UP possible}]
First, observe that at most one user receives a non-zero outcome, so UP is immediately satisfied. We next prove that the mechanism satisfies IC. 
    
First, ${\sf Sell}(X)$ or ${\sf Buy}(Y)$ orders are directly discarded by the mechanism, thus the users cannot benefit from misreporting their types.

Next, observe that the average price of the final execution (if any) is at least $-\frac{dY}{dX}(X_0)$, so the discarded ${\sf Buy}(X)/{\sf Sell}(Y)$ orders in step 2 don't have incentives to misreport their rates or quantities either, since obtaining zero and paying zero is already the best outcome for them. Also note that, when talking about wLE, these orders are not considered by definition.

Now, it remains to consider the case in which at least one order remains after discarding. 
The proof is similar to the incentive-compatibility proof for classic second-price auctions, but with more cases to handle. 
Let $(t_1,r_1,q_1)$ denote the remaining order with the highest exchange rate, and let $(t_2,r_2,q_2)$ denote the remaining order with the second-highest exchange rate. If there is only one remaining order, set $r_2=-\frac{dY}{dX}(X_0)$.

Under the mechanism, only the first order can be executed. Without loss of generality, suppose that $(t_1,r_1,q_1)$ is a ${\sf Buy}(X)$ order. Let $x$ be such that $\frac{Y(X_0-x)-Y(X_0)}{x}=r_2$. Then the first order will be executed if and only if its demand satisfy $q_1 \geq x$. If it can be executed, then the minimal amount of $X$ that the first user needs to buy is $x$. 
If, in addition, $r_1 > -\frac{dY}{dX}(X_0 - x)$ and $q_1 > x$, meaning that the first user would gain from continue to trade at the state $(X_0 - x, Y(X_0-x))$, then the mechanism instead allocates $x^*$ units of $X$ to the first user, where $x^*$ is chosen so that after execution, either the user's demand is fully fulfilled, or the marginal price matches its rate -- note that this is the best possible outcome for the first user. 

The first user has no incentives to under-report. Under-reporting will leave the outcome unchanged or cause it worse: the user no longer has the highest reported rate and receives nothing (due to under-reporting the rate), or the trade stops earlier than it otherwise would (due to under-reporting the rate or quantity). Similarly, over-reporting cannot help the first user either. Especially, over-reporting $q_1$ may cause the mechanism to allocate more than the user's true demand, producing an outcome that is not better for the user; over-reporting $r_1$ may force the first user to buy some unit at a marginal price exceeding its true acceptable rate, thus causing a worse outcome. 

All other users can benefit only by over-reporting their exchange rates enough to become the highest-rate order. But doing so may require them to trade at an average exchange rate worse than their true acceptable rate, and hence cannot improve their utility. Under-reporting clearly cannot help them, since it only weakens their position. Thus, truthful reporting is a dominant strategy. 
Additionally, it is not hard to verify that posting fake orders cannot improve any user's utility. 

The same reasoning applies symmetrically when the highest-rate order is of type ${\sf Sell}(Y)$. 
This finishes the proof of IC.

Finally, for the claimed wLE guarantee, observe that if there are only ${\sf Buy}(X)$ and ${\sf Sell}(Y)$ orders and the user with the highest exchange rate has a sufficiently large quantity, then the marginal price of the pool's final state is higher than or equal to every order's rate. Thus, under this case, wLE is also satisfied.

This finishes the proof.
\end{proof}

\subsection{UP + wLE}
\label{sec:upwle}

\paragraph{Simple case: all orders are the same type.}
If all orders are the same type --- 
take ${\sf Buy}(X)$ as an example --- then there is a simple mechanism that satisfies UP + wLE. 
Basically, let 
$r_{\rm avg}(x)$ denote the average price
per unit for purchasing $x$ units 
of $X$ from the pool. 
Because of the increasing marginal cost property (see~\Cref{sec:ammpreliminaries}), 
$r_{\rm avg}(x)$ is an increasing function in $x$. Further, as $x$ goes to $0$, 
$r_{\rm avg}(x)$
goes to the initial market price. 

Suppose we are given a batch of orders $\{t_i, \rate_i, \qty_i, \_\}_i$
where we may assume that every $\rate_i$ is at least
the initial market price (otherwise, we simply discard the order). 
Let $D(r) := \sum_i \mathds{1}(\rate_i \geq r) \cdot \qty_i$
denote the total demand that must be fulfilled to respect wLE at any rate $r\geq 0$, where $\mathds{1}(\cdot)$ is the indicator function. 
The mechanism finds the $x^* \in [0, \sum_i \qty_i]$ such that $D(r_{\rm end}(x^*)) = x^*$, and buys $x^*$ units of $X$ from the pool and allocates them to the orders whose declared rate is at least $r_{\rm end}(x^*)$. In particular, all orders whose declared rate is strictly larger than $r_{\rm end}(x^*)$ will be fully fulfilled, and the orders whose declared rate is exactly $r_{\rm end}(x^*)$ might be partially fulfilled. 
Everyone pays a uniform per-unit price of $r_{\rm avg}(x^*)$. It is not hard to see that both UP and wLE are guaranteed by design. 
It remains to argue that an $x$ satisfying the above condition must exist --- this can be seen
from the combination of the following facts: (1) $D(r_{\rm end}(x))$
is non-increasing in $x$; and (2) 
$D(\max_i\set{r_i}+\epsilon)=0$ and 
$D(r_{\rm end}(0)) = \sum_i \qty_i$.  

We now explain why the above mechanism is not IC, which might
be counterintuitive at first sight, because one might be tempted
to think that the fair pricing offered by UP facilitates IC.  
Observe that under truthful reporting, every executed order $i$ always receives positive utility, because their paid price is $r_{\rm avg}(x^*) < r_{\rm end}(x^*) \leq r_i$. 
Suppose that there is some user $i$ with $r_i = r_{\rm end}(x^*) - \epsilon_1$ and $q_i=\epsilon_2$. Following the description of the mechanism above, this order is not executed and the user gets zero utility under truthful reporting. 
However, the following strategy will allow $i$ 
to get positive utility.
Instead of truthful reporting, $i$ slightly over-reports its valuation, allowing the order to get executed. As long as $q_i$ is sufficiently small, the induced uniform price will be less than the order's true valuation, making its utility positive.

\paragraph{The more general case: all types of orders.} 
Next, we extend the idea above to the general setting with all types of orders and present the complete mechanism in~\Cref{mechanism:UP + wLE}.

\begin{theorem}
    Mechanism~\ref{mechanism:UP + wLE} satisfies UP and wLE.
\end{theorem}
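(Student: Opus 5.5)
The plan is to verify the two properties directly from the structure of Mechanism~\ref{mechanism:UP + wLE}. I take it to generalize the single-type mechanism above as follows. Every ${\sf Sell}(Y)$ order is converted to ${\sf Buy}(X)$-side demand and every ${\sf Buy}(Y)$ order to ${\sf Sell}(X)$-side supply, using the rate convention of the model. Ineligible orders are discarded. The net excess demand for $X$ is computed as a function of a candidate ending rate, and the mechanism clears at the point where this net demand equals the amount of $X$ exchanged with the pool.

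\textbf{Step 1: the clearing point exists.} I would define the net demand
\[
N(r) := D(r) - S(r), \qquad D(r):=\sum_{i\in\text{buy side}} \mathds{1}(r_i\ge r)\,q_i, \qquad S(r):=\sum_{i\in\text{sell side}} \mathds{1}(r_i\le r)\,q_i .
\]
This $N(r)$ is non-increasing in $r$. The pool's ending rate $r_{\rm end}(x)$ is strictly increasing in the signed quantity $x$ withdrawn from the pool, with negative $x$ meaning $X$ is deposited. The mechanism picks $x^*$ with $N(r_{\rm end}(x^*)) = x^*$, allowing partial fills at the threshold rate. Existence follows from an intermediate-value argument on a monotone step function against a continuous increasing one, exactly as in the simple case.

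\textbf{Step 2: the case split and UP.} I split on the sign of $x^*$.
\begin{itemize}
\item If $x^*\ge 0$, eligible sellers have $r_i\le r_{\rm end}(0)\le r_{\rm end}(x^*)$, so all of them are fully filled. Buyers with rate above $r_{\rm end}(x^*)$ are fully filled, and those at exactly that rate are partially filled.
\item If $x^*<0$, the situation is symmetric.
\end{itemize}
Every executed order trades at the single price $p = r_{\rm avg}(x^*)$, which gives UP immediately. I would then check conformance to the potential function. Buyers pay $p\cdot B$ and sellers receive $p\cdot S$ with $B-S=x^*$, so the pool receives $p\,x^* = Y(X_0-x^*)-Y_0$, which matches the curve by the definition of $r_{\rm avg}$. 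Reasonable fulfillment and no free lunch are immediate.

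\textbf{Step 3: individual rationality (the crux).} For $x^*\ge 0$ we have $r_0 \le r_{\rm avg}(x^*) \le r_{\rm end}(x^*)$. Executed buyers have $r_i\ge r_{\rm end}(x^*)\ge p$. Eligible sellers have $r_i\le r_0\le p$. The case $x^*<0$ is the mirror image. This is where discarding ineligible orders matters. For instance, a seller with $r_i>r_0$ could face $p<r_i$, which is exactly the obstruction behind \Cref{theorem: UP + LE impossible}.

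\textbf{Step 4: wLE.} Any eligible order that is not completely filled must be a buy-side order with $r_i\le r_{\rm end}(x^*)$, or symmetrically a sell-side order with $r_i\ge r_{\rm end}(x^*)$. This is precisely the wLE condition.

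The main obstacle I expect is handling the eligible orders on the ``minority'' side correctly when the mechanism fills them fully. I would argue that $N$ changes sign at most once relative to $r_0$. Thus whichever side dominates, the other side's eligible orders all lie weakly beyond $r_0$ in the favorable direction, and both IR and wLE hold for them.
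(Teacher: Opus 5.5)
\textbf{Verdict: genuine gap in Step~1 (existence of a clearing point), though it is repairable.}

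Your architecture matches the paper's. You clear at a rate $r^*$, trade every executed order at $\overline{p}(r^*)=r_{\rm avg}(x^*)$, verify conformance through $p\,x^*=Y(X_0-x^*)-Y_0$ (the computation behind \Cref{lemma:UPUPUPUP}), and get wLE because every eligible order strictly inside the threshold is fully filled. Your explicit IR check via eligibility is a step the paper leaves implicit.

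The gap lies exactly in what distinguishes Mechanism~\ref{mechanism:UP + wLE} from the single-type case. For ${\sf Sell}(Y)$ and ${\sf Buy}(Y)$ orders the quantity $q_i$ is in units of $Y$. Their demand or supply in units of $X$ is therefore $q_i/p$, where $p$ is the clearing price, which itself depends on $x^*$.
\begin{itemize}
\item Your $N(r)$ sums raw $q_i$. Your accounting $B-S=x^*$ in Step~2 conforms to the curve only if the conversion is done at exactly the clearing price. Converting at the order's own rate instead would charge a ``fully filled'' ${\sf Sell}(Y)$ order only $p\,q_i/r_i<q_i$ units of $Y$. That order is then not completely fulfilled although $r_i>r_{\rm end}(x^*)$, which violates wLE.
\item Once you convert at $\overline{p}(r)$, as the paper does with $\beta=q/p$ in $Q(\cdot\,;\overline{p}(r))$, net demand is no longer non-increasing. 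A ${\sf Buy}(Y)$ order supplies $q_i/\overline{p}(r)$ units of $X$, and this strictly decreases in $r$.
\end{itemize}

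A concrete counterexample: take the constant-product curve with $X_0=Y_0=1$, so $r_0=1$ and $\overline{p}(r)=\sqrt{r}$. Add a ${\sf Buy}(X)$ order with rate $100$ and quantity $10.1$, and a ${\sf Buy}(Y)$ order with rate $1/4$ and quantity $10$ (in units of $Y$). On $[1,100)$, net demand minus the pool's share is $10.1-10/\sqrt{r}-\Delta x(r)=9.1-9/\sqrt{r}$, which is strictly increasing. Clearing points exist both at $r^*=100$ (condition (a)) and at $r^*=(9/9.1)^2<r_0$ (condition (b)). So your ``monotone step function against a continuous increasing one'' argument does not apply. Your closing claim that $N$ changes sign at most once relative to $r_0$ is also false. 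That last point is harmless for UP and wLE, since your case split on the sign of $x^*$ works at any clearing point.

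\textbf{The fix.} Existence needs only continuity between breakpoints plus downward jumps, not monotonicity. Consider the excess-demand case $Q(\mathcal{D}_{>r_0};r_0)>Q(\mathcal{S};r_0)$; for $r\ge r_0$, eligibility gives $\mathcal{S}_{<r}\cup\mathcal{S}_{=r}=\mathcal{S}$. Set
$g(r):=Q(\mathcal{D}_{>r};\overline{p}(r))-Q(\mathcal{S};\overline{p}(r))-\Delta x(r)$.
\begin{itemize}
\item $g$ is continuous wherever the set $\mathcal{D}_{>r}$ is constant, jumps only downward at demand rates, and is right-continuous.
\item $g(r_0)>0$, and $g(r)<0$ once $r$ exceeds all demand rates.
\item Let $r^*:=\inf\{r\ge r_0: g(r)\le 0\}$. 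Right-continuity gives $r^*>r_0$ and $g(r^*)\le 0$.
\item The left limit at $r^*$ gives $Q(\mathcal{D}_{\ge r^*};\overline{p}(r^*))\ge Q(\mathcal{S};\overline{p}(r^*))+\Delta x(r^*)$. Together with $g(r^*)\le 0$, this is exactly condition~(a).
\end{itemize}
The paper does the same thing discretely. It splits on $r_0$, takes the smallest index $i$ among the sorted rates of $\mathcal{D}_{>r_0}$ at which the inequality flips, and applies the intermediate value theorem on $[r_{i-1},r_i]$, where $\mathcal{D}_{>r}$ is constant. If you replace Step~1 with this argument and make the conversion at $\overline{p}(r^*)$ explicit in Step~2, the rest of your proof goes through.
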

\begin{proof}
    Suppose that we have the initial pool state $(X_0, Y_0)$, the potential function $\Phi$, and a set of orders ${\bf b}$. Since the mechanism does not make use of the auxiliary information field $\aux$, we simply assume each order is a tuple of the form $(\tp, \rate, \qty)$.

\begin{lemma}\label{lemma:UPUPUPUP}
    Suppose that orders are executed at a uniform price $\overline{p}$ under UP, and the pool state changes from $(X_0, Y_0)$ to $(X^*, Y^*)$ after execution. If $(X_0, Y_0) \neq (X^*, Y^*)$, then we have $\frac{Y^* - Y_0}{X_0 - X^*} = \overline{p}$.
\end{lemma}
\begin{proof}
    Let $\mathcal{Q}_{{\sf Buy}(X)}$ denote the total quantity of $X$ bought by the ${\sf Buy}(X)$ orders, who must pay $\mathcal{Q}_{{\sf Buy}(X)}\cdot \overline{p}$ units of $Y$. 
    Let $\mathcal{Q}_{{\sf Sell}(Y)}$ denote the total quantity of $Y$ sold by the ${\sf Sell}(Y)$ orders, who must receive $\mathcal{Q}_{{\sf Sell}(Y)}/\overline{p}$ units of $X$. 
    Similarly, suppose the ${\sf Buy}(Y)$ orders buy $\mathcal{Q}_{{\sf Buy}(Y)}$ units of $Y$ in total, by paying $\mathcal{Q}_{{\sf Buy}(Y)}/\overline{p}$ units of $X$; ${\sf Sell}(X)$ orders sell $\mathcal{Q}_{{\sf Sell}(X)}$ units of $X$ and receive $\mathcal{Q}_{{\sf Sell}(X)} \cdot \overline{p}$ units of $Y$ in total. Let ${\bf b}$ denote the set of all orders, $x_{{\bf b}}$ and $y_{{\bf b}}$ denote the net gains in $X$ and $Y$ of the set (where a negative gain means a loss). Then we have 
    \begin{align*}
        x_{{\bf b}} &= \mathcal{Q}_{{\sf Buy}(X)} + \mathcal{Q}_{{\sf Sell}(Y)}/\overline{p} - \mathcal{Q}_{{\sf Buy}(Y)}/\overline{p} - \mathcal{Q}_{{\sf Sell}(X)},   \\
        y_{{\bf b}} &= - \mathcal{Q}_{{\sf Buy}(X)}\cdot \overline{p} - \mathcal{Q}_{{\sf Sell}(Y)} + \mathcal{Q}_{{\sf Buy}(Y)} + \mathcal{Q}_{{\sf Sell}(X)} \cdot \overline{p}.
    \end{align*}

    If $x_{{\bf b}} \neq 0$, this amount must be traded with the pool, implying that $X^* = X_0 - x_{{\bf b}} $. Likewise, $Y^* = Y_0 - y_{{\bf b}}$ also holds. It follows that $\frac{Y^* - Y_0}{X_0 - X^*} = \frac{- y_{{\bf b}}}{x_{{\bf b}}} = \overline{p}$.
\end{proof}

Recall that a weakly locally efficient (wLE) mechanism only requires that after execution, no \emph{eligible} order can be strictly better by making an additional trade with the pool. 
Let $r_0 \coloneqq -\frac{dY}{dX}(X_0)$ be the initial exchange rate. Likewise, we first ignore all ${\sf Buy}(X)/{\sf Sell}(Y)$ orders whose specified rate $r < r_0$, and ignore all ${\sf Buy}(Y)/{\sf Sell}(X)$ orders whose specified rate $r > r_0$. 

Suppose that at $r_0$ we have either
\[Q\left(\mathcal{D}_{> r_0}; \overline{p}(r_0)\right)\leq Q\left(\mathcal{S}_{< r_0}\cup \mathcal{S}_{= r_0}; \overline{p}(r_0)\right) \leq Q\left(\mathcal{D}_{> r_0} \cup \mathcal{D}_{=r_0}; \overline{p}(r_0)\right)\]
or 
\[Q\left(\mathcal{S}_{< r_0}; \overline{p}(r_0)\right) \leq Q\left(\mathcal{D}_{> r_0}\cup \mathcal{D}_{= r_0}; \overline{p}(r_0)\right) \leq Q\left(\mathcal{S}_{< r_0} \cup \mathcal{S}_{=r_0}; \overline{p}(r_0)\right).\]

Then we have found $r^*$ to be $r_0$, and fully execute the orders specified in step 4. Since all orders in $\mathcal{S}_{<r_0}$ and $\mathcal{D}_{>r_0}$ are fully executed at the uniform price $r_0$, we can conclude that the outcome satisfies UP and wLE.

Now consider the harder case where neither of the conditions above holds at $r_0$. Next, we focus on the case where \[Q\left(\mathcal{S}_{< r_0}\cup \mathcal{S}_{= r_0}; \overline{p}(r_0)\right) < Q\left(\mathcal{D}_{> r_0}; \overline{p}(r_0)\right).\]
The other case can be handled symmetrically.

Sort the orders in $\mathcal{D}_{>r_0}$ in ascending order based on their specified rate $r$. Write the resulting list as $\set{(\tp_i, \rate_i, \qty_i)}_{i \in [1:k]}$. 
We find the smallest $i\in[1:k]$ such that
\[Q\left(\mathcal{D}_{> r_i}; \overline{p}(r_i)\right)\leq Q\left(\mathcal{S}_{< r_i}\cup \mathcal{S}_{= r_i}; \overline{p}(r_i)\right) + \Delta x(r_i).\]
Note that \[Q\left(\mathcal{S}_{< r_0}\cup \mathcal{S}_{= r_0}; \overline{p}(r_0)\right) < Q\left(\mathcal{D}_{> r_0}; \overline{p}(r_0)\right).\] and \[Q\left(\mathcal{D}_{> r_{k}}; \overline{p}(r_{k})\right)\leq Q\left(\mathcal{S}_{< r_{k}}\cup \mathcal{S}_{= r_{k}}; \overline{p}(r_{k})\right).\]
We know that such an $i$ must exist.

Then we know that 
\[Q\left(\mathcal{S}_{< r_{i-1}}\cup \mathcal{S}_{= r_{i-1}}; \overline{p}(r_{i-1})\right) + \Delta x(r_{i-1}) < Q\left(\mathcal{D}_{> r_{i-1}}; \overline{p}(r_{i-1})\right)\]
and
\[Q\left(\mathcal{D}_{> r_i}; \overline{p}(r_i)\right)\leq Q\left(\mathcal{S}_{< r_i}\cup \mathcal{S}_{= r_i}; \overline{p}(r_i)\right) + \Delta x(r_i).\]

In the following case we will pick $r^*$ to be $r_i$ and we are done:
\[Q\left(\mathcal{D}_{> r_i}; \overline{p}(r_i)\right)\leq Q\left(\mathcal{S}_{< r_i}\cup \mathcal{S}_{= r_i}; \overline{p}(r_i)\right) + \Delta x(r_i)\leq Q\left(\mathcal{D}_{> r_i}\cup \mathcal{D}_{= r_i}; \overline{p}(r_i)\right).\]
Otherwise, we know that 
\[
Q\left(\mathcal{D}_{> r_i}\cup \mathcal{D}_{= r_i}; \overline{p}(r_i)\right)<Q\left(\mathcal{S}_{< r_i}\cup \mathcal{S}_{= r_i}; \overline{p}(r_i)\right) + \Delta x(r_i). 
\]
Since these functions are continuous when $r_{i-1}\leq r\leq r_{i}$, there must exist $r_{i-1}\leq r^*\leq r_{i}$ such that
\[
Q\left(\mathcal{D}_{> r^*}; \overline{p}(r^*)\right)= Q\left(\mathcal{S}_{< r^*}\cup \mathcal{S}_{= r^*}; \overline{p}(r^*)\right) + \Delta x(r^*),
\]
which clearly satisfies the condition (a). 
Then we trade the orders as specified in step 4. Clearly, after execution the pool's marginal price is $r^*$. Since all orders in $\mathcal{S}_{<r^*}$ and $\mathcal{D}_{>r^*}$ are fully executed, this satisfies wLE. By \Cref{lemma:UPUPUPUP}, this execution also satisfies UP.
\end{proof}

\begin{figure*}[!ht]
\centering
\begin{tcolorbox}[colframe=black, colback=white, width=1.03\textwidth, boxrule=0.8pt]

\textbf{Input:} The initial pool state $(X_0, Y_0)$, the potential function $\Phi$, and a set of orders ${\bf b}$. Since the mechanism does not make use of the auxiliary information field $\aux$, each order is simply represented as a tuple $(\tp, \rate, \qty)$.

\vspace{0.5em}
\textbf{Output:} A net gain in assets $X$ and $Y$ for each order. 

\vspace{1em}
\textbf{Mechanism:}
\begin{enumerate}
    \item Let $r_0 \coloneqq -\frac{dY}{dX}(X_0)$ be the initial market price. 
    Discard all ${\sf Buy}(X)/{\sf Sell}(Y)$ orders whose declared exchange rate $r < r_0$, and discard all ${\sf Buy}(Y)/{\sf Sell}(X)$ orders whose declared rate $r > r_0$. Let ${\bf b}'$ be the remaining orders.
    
    \item Let $\mathcal{D}$ denote the set of ${\sf Buy}(X)/{\sf Sell}(Y)$ orders in ${\bf b}'$ which \emph{demands} $X$, and $\mathcal{S}$ denote the set of ${\sf Buy}(Y)/{\sf Sell}(X)$ orders in ${\bf b}'$ which \emph{supplies} $X$. More specifically, we let
    \begin{equation*}
    \begin{cases}
        \mathcal{D}_{>r^*} = \set{ (\tp, \rate, \qty) \in {\bf b}' \mid t ={\sf Buy}(X)/{\sf Sell}(Y) \text{ and } r > r^* }, \\
        \mathcal{D}_{=r^*} = \set{(\tp, \rate, \qty) \in {\bf b}' \mid t ={\sf Buy}(X)/{\sf Sell}(Y) \text{ and } r = r^* }, \\
        \mathcal{S}_{<r^*} = \set{(\tp, \rate, \qty) \in {\bf b}' \mid t ={\sf Buy}(Y)/{\sf Sell}(X) \text{ and } r < r^* }, \\
        \mathcal{S}_{=r^*} = \set{(\tp, \rate, \qty) \in {\bf b}' \mid t ={\sf Buy}(Y)/{\sf Sell}(X) \text{ and } r = r^*}.
    \end{cases}
    \end{equation*}
    For a set $A\in \set{\mathcal{D}_{>r^*}, \mathcal{D}_{=r^*}, \mathcal{S}_{<r^*}, \mathcal{S}_{=r^*}}$ of orders, denote their demand/supply quantity under an eligible price $p$ by $Q(A, p) = \sum_{(\tp, \rate, \qty) \in A} \beta(\tp, \rate, \qty; p)$, where $\beta(\tp, \rate, \qty; p) = \begin{cases}
    \qty, & \text{ if } t = {\sf Buy}(X) / {\sf Sell}(X);\\
    \qty/p, & \text{ if } t = {\sf Buy}(Y) / {\sf Sell}(Y).
    \end{cases}$

    For any rate $r^*\neq r_0$ that corresponds to the pool state $(X^*, Y^*)$, denote $\overline{p}(r^*) = \frac{Y^* - Y_0}{X_0 - X^*}$ and $\Delta x(r^*) = X_0 - X^*$. Define $\overline{p}(r_0) = r_0$ and $\Delta x(r_0) = 0$.

    \item Find $r^*>0$ such that one of the following conditions is met:
    \begin{enumerate}[label=(\alph*), leftmargin=0.9em]
        \item $Q\left(\mathcal{D}_{> r^*}; \overline{p}(r^*)\right)\leq Q\left(\mathcal{S}_{< r^*}\cup \mathcal{S}_{= r^*}; \overline{p}(r^*)\right) + \Delta x(r^*) \leq Q\left(\mathcal{D}_{> r^*} \cup \mathcal{D}_{=r^*}; \overline{p}(r^*)\right)$ \textit{and} $r^* \geq r_0$; 
        
        \item $Q\left(\mathcal{S}_{< r^*}; \overline{p}(r^*)\right) \leq Q\left(\mathcal{D}_{> r^*}\cup \mathcal{D}_{= r^*}; \overline{p}(r^*)\right) -\Delta x(r^*) \leq Q\left(\mathcal{S}_{< r^*} \cup \mathcal{S}_{=r^*}; \overline{p}(r^*)\right)$ \textit{and} $r^* \leq r_0$.
    \end{enumerate}

    \item If $r^*$ satisfies condition (a) above, trade  $Q\left(\mathcal{S}_{< r^*}\cup \mathcal{S}_{= r^*}; \overline{p}(r^*)\right) + \Delta x(r^*)$ units of $X$ at the price $\overline{p}(r^*)$. Specifically, all orders in $\mathcal{S}_{< r^*}$, $\mathcal{S}_{= r^*}$, and $\mathcal{D}_{> r^*}$ are fully fulfilled; orders in $\mathcal{D}_{=r^*}$ are partially fulfilled such that the fulfilled quantity is exactly 
    $Q\left(\mathcal{S}_{< r^*}\cup \mathcal{S}_{= r^*}; \overline{p}(r^*)\right) + \Delta x(r^*) - Q\left(\mathcal{D}_{> r^*}; \overline{p}(r^*)\right)$. 
    $\Delta x(r^*) \geq 0$ units of $X$ are sold by the pool.

    If $r^*$ satisfies condition (b) above, trade $Q\left(\mathcal{D}_{> r^*}\cup \mathcal{D}_{= r^*}; \overline{p}(r^*)\right) -\Delta x(r^*)$ units of $X$ at the price $\overline{p}(r^*)$. Specifically, all orders in $\mathcal{D}_{> r^*}$, $\mathcal{D}_{= r^*}$, and $\mathcal{S}_{< r^*}$ are fully fulfilled; orders in $\mathcal{S}_{= r^*}$ are partially fulfilled such that the fulfilled quantity is exactly $Q\left(\mathcal{D}_{> r^*}\cup \mathcal{D}_{= r^*}; \overline{p}(r^*)\right) -\Delta x(r^*) - Q\left(\mathcal{S}_{< r^*}; \overline{p}(r^*)\right)$. $-\Delta x(r^*) \geq 0$ units of $X$ are bought by the pool.

\end{enumerate}
\end{tcolorbox}
\caption{Mechanism 2 satisfying UP + wLE}
\label{mechanism:UP + wLE}
\end{figure*}

\section{Summary and Future Directions}\label{section: future}

In this work, we reexamine the problem of AMM mechanism design with respect to the widely desired properties including IC, LE/wLE, and UP. Our main technical contribution is the establishment of a trilemma among IC, wLE, and UP. In addition, we also prove that no AMM mechanism can simultaneously achieve IC + LE or UP + LE. 

The main open question left by this paper and \cite{ammmechdesign} is whether there exists a mechanism that is IC and wLE in the plain model (without the weak fair-sequencing assumption). We conjecture that there is no such mechanism, which, if true, would imply the necessity of a consensus guarantee in AMM mechanism design. We view such an impossibility result as one of the most crucial justifications for the ``consensus $+$ application'' paradigm for mitigating MEV.  On the other hand, a positive answer to this question would yield an exceptionally strong result: the existence of an AMM mechanism that achieves both IC and wLE in the plain model. Note that we consider such a mechanism highly powerful since our notion of IC implies a very strong incentive guarantee with respect to a very general strategy space, including misreporting valuation and demand, splitting real orders, posting fake orders, censoring honest users' orders, and others. Given this ``win-win'' situation, we believe that studying this problem is of fundamental interest for
future work.  

On the feasibility side, a natural next step is to investigate mechanism design for multi-token AMMs. In particular, an important open question is whether the IC + wLE mechanism developed for the two-asset setting \cite{ammmechdesign} can be generalized to multi-asset AMMs under the same weak fair-sequencing assumptions. Similarly, one may ask whether it is possible to construct a mechanism that satisfies UP + wLE or IC + UP in the multi-token setting. 
Exploring the extent to which these feasibility results can be generalized is a crucial step toward a deeper understanding of mechanism design for multi-token AMMs.

Speaking more broadly, it seems unlikely that either the consensus layer or the application layer alone can fully eliminate MEV. This makes combined ``consensus + application'' approaches particularly appealing. For example, the work \cite{ammmechdesign} demonstrates that AMM mechanism design benefits substantially from even weak consensus guarantees --- results that appear difficult to achieve otherwise. It is therefore worth pursuing this line of research in both directions. On the one hand, can weak consensus guarantees also yield improvements for other DeFi applications such as decentralized lending or stablecoins? On the other hand, can insights from DeFi mechanism design help abstract new desiderata that, in turn, advance the study of consensus itself?

\section*{Acknowledgments}
This work is in part supported by NSF awards 2212746, 2044679, a Packard Fellowship, a gift from the Ethereum Foundation, a CyLab SBI grant, and a generous gift from the late Nikolai Mushegian.

\bibliographystyle{alpha}
\bibliography{refs,crypto,bitcoin}

\appendix
\section{Formal Definition: Utility Ranking Among Outcomes}
\label{sec:rank}

In this section, we give a complete
definition of the ordering among outcomes for ${\sf Sell}(Y)$/${\sf Buy}(Y)$ orders. 
Recall that we use a tuple $T = (t, \rate, \qty, \_)$
to represent the type of a user, where $(\rate, \qty)$ denotes the user's true valuation and quantity, and $\aux$
denotes any auxiliary information.

For a user with intrinsic type $T = ({\sf Sell}(Y), r, \qty, \_)$, meaning that the user wants to sell at most $\qty$ units of $Y$, and for each unit of $Y$, the user is willing to receive at least $1/r$ units of $X$. 
For such a user, an outcome is feasible only if the user's net loss in $Y$ does not exceed $\qty$, namely if $-y \leq \qty$.
We define utility by a \textit{total} ordering induced by $\mathcal{U}$: we say that $(x_0, y_0) \preceq_T (x_1, y_1)$ if $\mathcal{U}(x_0, y_0) \leq \mathcal{U}(x_1, y_1)$, where
\begin{equation*}
    \mathcal{U}(x, y) \coloneqq \begin{cases}
        \ x \cdot r + y, & \text{if $-y \leq \qty$};\\[6pt]
        \ -\infty, & \text{otherwise}.
    \end{cases}
\end{equation*} 
Moreover, if 
$\mathcal{U}(x_0, y_0) < \mathcal{U}(x_1, y_1)$, then we say 
$(x_0, y_0) \prec_T (x_1, y_1)$, i.e., the latter outcome is strictly preferred.

For a user with intrinsic type $T = ({\sf Buy}(Y), r, \qty, \_)$, where $\qty$ represents the user's intrinsic demand for asset $Y$ and the user is willing to pay at most $1/r$ units of $X$ for each unit of $Y$. We define utility as a \textit{partial} ordering.
Specifically, we say that $(x_0, y_0) \preceq_T (x_1, y_1)$ if \textit{both} of the following inequalities hold:
\begin{equation*}
\begin{cases}
\ x_0 \cdot r + y_0 \leq x_1 \cdot r + y_1; \\[6pt]
\ x_0 \cdot r + \min\{y_0, q\} \leq x_1 \cdot r + \min\{y_1, q\}.
\end{cases}
\end{equation*}
Further, if at least one $\leq$ is replaced with $<$, then we say that $(x_0, y_0) \prec_T (x_1, y_1)$, i.e., the user strictly prefers the latter outcome.

\end{document}